\newtheorem{theorem}{Theorem}[section]
\newtheorem{cor}[theorem]{Corollary}
\newtheorem{lemma}[theorem]{Lemma}
\begin{document}

\title{Distance-Two Colorings of Barnette Graphs}

\author[T. Feder]{Tom\'{a}s Feder}
\address{268 Waverley Street \\
               Palo Alto, CA 94301, USA}
\email{tomas@theory.stanford.edu}

\author[P. Hell]{Pavol Hell}
\address{School of Computing Science \\
 	      Simon Fraser University \\
              Burnaby, B.C., Canada V5A 1S6}
\email{pavol@sfu.ca}

\author[C.Subi]{Carlos Subi}
\address{Los Altos Hills}
\email{carlos.subi@hotmail.com}

\date{}

\maketitle

\begin{abstract}
Barnette identified two interesting classes of cubic polyhedral graphs for which he conjectured 
the existence of a Hamiltonian cycle. Goodey proved the conjecture for the intersection of the
two classes. We examine these classes from the point of view of distance-two colorings. A 
distance-two $r$-coloring of a graph $G$ is an assignment of $r$ colors to the vertices of 
$G$ so that any two vertices at distance at most two have different colors. Note that a cubic
graph needs at least four colors. The distance-two four-coloring problem for cubic planar 
graphs is known to be NP-complete. We claim the problem remains NP-complete for 
tri-connected bipartite cubic planar graphs, which we call type-one Barnette graphs, 
since they are the first class identified by Barnette. By contrast, we claim the problem
is polynomial for cubic plane graphs with face sizes $3, 4, 5,$ or $6$, which we call 
type-two Barnette graphs, because of their relation to Barnette's second conjecture.
We call Goodey graphs those type-two Barnette graphs all of whose faces have size $4$ 
or $6$. We fully describe all Goodey graphs that admit a distance-two four-coloring, 
and characterize the remaining type-two Barnette graphs that admit a distance-two 
four-coloring according to their face size.

For quartic plane graphs, the analogue of type-two Barnette graphs are graphs with 
face sizes $3$ or $4$. For this class, the distance-two four-coloring problem is also 
polynomial; in fact, we can again fully describe all colorable instances -- there are 
exactly two such graphs.
\end{abstract}

\section{Introduction}

Tait conjectured in 1884 \cite{tait} that all cubic polyhedral graphs, i.e., all tri-connected cubic planar graphs, 
have a Hamiltonian cycle; this was disproved by Tutte in 1946 \cite{tutte}, and the study of Hamiltonian cubic 
planar graphs has been a very active area of research ever since, see for instance \cite{aldred,npc,derek,lu}. 
Barnette formulated two conjectures that have been at the centre of much of the effort: (1) that {\em bipartite} 
tri-connected cubic planar graphs are Hamiltonian (the case of Tait's conjecture where all face sizes are even) 
\cite{B}, and (2) that tri-connected cubic planar graphs with all face sizes $3, 4, 5$ or $6$ are Hamiltonian, cf.
\cite{b2,m}. Goodey \cite{good,goodtwo} proved that the conjectures hold on the intersection of the two 
classes, i.e., that tri-connected cubic planar graphs with all face sizes $4$ or $6$ are Hamiltonian. When 
all faces have sizes $5$ or $6$, this was a longstanding open problem, especially since these graphs 
(tri-connected cubic planar graphs with all face sizes $5$ or $6$) are the popular {\em fullerene graphs} \cite{rok}.
The second conjecture has now been affirmatively resolved in full \cite{kardos}. For the first conjecture, 
two of the present authors have shown in \cite{tomas} that if the conjecture is false, then the Hamiltonicity 
problem for tri-connected cubic planar graphs is NP-complete. In view of these results and conjectures, in 
this paper we call bipartite tri-connected cubic planar graphs {\em type-one Barnette graphs}; we call 
cubic plane graphs with all face sizes $3, 4, 5$ or $6$ {\em type-two Barnette graphs}; and finally we 
call cubic plane graphs with all face sizes $4$ or $6$ {\em Goodey graphs}. Note that it would be more 
logical, and historically accurate, to assume tri-connectivity also for type-two Barnette graphs and for 
Goodey graphs. However, we prove our positive results without needing tri-connectivity, and hence we 
do not assume it.

Cubic planar graphs have been also of interest from the point of view of colorings \cite{kostochkasurvey,havet}. 
In particular, they are interesting for distance-two colourings. Let $G$ be a graph with degrees at most $d$.
A {\em distance-two $r$-coloring} of $G$ is an assignment of colors from $[r]=\{1,2,\ldots,r\}$ to the vertices
of $G$ such that if a vertex $v$ has degree $d'\leq d$ then the $d'+1$ colors of $v$ and of all the neighbors 
of $v$ are all distinct. (Thus a distance-two coloring of $G$ is a classical coloring of $G^2$.) Clearly a graph
with maximum degree $d$ needs at least $d+1$ colors in any distance-two coloring, since a vertex of 
degree $d$ and its $d$ neighbours must all receive distinct colors. It was conjectured by Wegner \cite{weg} 
that a planar graph with maximum degree $d$ has a distance-two $r$-colouring where $r=7$ for $d=3$, 
$r=d+5$ for $d=4, 5, 6, 7$, and $r=\lfloor 3d/2 \rfloor + 1$ for all larger $d$. The case $d=3$ has been 
settled in the positive by Hartke, Jahanbekam and Thomas \cite{hjt}, cf. also \cite{carsten}.

For cubic planar graphs in general it was conjectured in \cite{hjt} that if a cubic planar graph 
is tri-connected, or has no faces of size five, then it has a distance-two six-coloring. We propose 
a weaker version of the second case of the conjecture, namely, we conjecture that {\em a bipartite 
cubic planar graph can be distance-two six-colored}. We prove this in one special case (Theorem
\ref{coze}), which of course also confirms the conjecture of Hartke, Jahanbekam and Thomas 
for that case. Heggerness and Telle \cite{ht} have shown that the problem of distance-two four-coloring 
cubic planar graphs is NP-complete. On the other hand, Borodin and Ivanova \cite{borodin} have shown 
that subcubic planar graphs of girth at least $22$ can be distance-two four-colored. In fact, there has
been much attention focused on the relation of distance-two colorings and the girth, especially in the
planar context \cite{borodin,havet}.

Our results focus on distance-two colorings of cubic planar graphs, with particular attention on Barnette 
graphs, of both types. We prove that a cubic plane graph with all face sizes divisible by four can always
be distance-two four-colored, and a give a simple condition for when a bi-connected cubic plane graph 
with all face sizes divisible by three can be distance-two four-colored using only three colors per face.
It turns out that the distance-two four-coloring problem for type-one Barnette graphs is NP-complete,
while for type-two Barnette graphs it is not only polynomial, but the positive instances can be explicitly
described. They include one infinite family of Goodey graphs (cubic plane graphs with all faces of size
$4$ or $6$), and all type-two Barnette graphs which have all faces of size $3$ or $6$. Interestingly,
there is an analogous result for quartic (four-regular) graphs: all quartic planar graphs with faces of
only sizes $3$ or $4$ that have a distance-two five coloring can be explicitly described; there are only
two such graphs.

Note that we use the term ``plane'' graph when the actual embedding is used, e.g., by discussing 
the faces; when the embedding is unique, as in tri-connected graphs, we stick with writing ``planar".

\section{Relations to edge-colorings and face-colorings}

Distance-two colorings have a natural connection to edge-colorings.

\begin{theorem}
Let $G$ be a graph with degrees at most $d$ that admits a distance-two $(d+1)$-coloring, 
with $d$ odd. Then $G$ admit an edge-coloring with $d$ colors.
\end{theorem}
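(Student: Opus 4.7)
The plan is to use the given distance-two $(d+1)$-coloring $c\colon V(G) \to [d+1]$ to slice $E(G)$ into very small matchings, and then to rebundle those matchings with the help of a $1$-factorization of $K_{d+1}$. First, for every unordered pair $\{i,j\} \subseteq [d+1]$ with $i \ne j$, I note that the set $E_{ij}$ of edges of $G$ with one endpoint of color $i$ and the other of color $j$ is a matching: if two such edges shared a common endpoint $u$, the other two endpoints would be distinct neighbors of $u$ carrying the same color, contradicting the distance-two property. Thus $E(G)$ partitions canonically into the matchings $\{E_{ij} : i \ne j\}$, indexed by the edges of $K_{d+1}$ on vertex set $[d+1]$.

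Next I invoke the classical fact that when $n$ is even, $K_n$ admits a $1$-factorization into $n-1$ perfect matchings. Since $d$ is odd, $d+1$ is even, so $K_{d+1}$ decomposes into $d$ perfect matchings $F_1, \ldots, F_d$. I then define $\phi\colon E(G) \to [d]$ by setting $\phi(uv) = k$, where $k$ is the unique index with $\{c(u), c(v)\} \in F_k$; equivalently, each super-color $k$ is the union $\bigcup_{\{i,j\} \in F_k} E_{ij}$.

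To check that $\phi$ is a proper edge coloring, suppose two edges $uv, uw$ of $G$ share the endpoint $u$ and both receive color $k$. Then $\{c(u), c(v)\}$ and $\{c(u), c(w)\}$ are two edges of $F_k$ meeting at the vertex $c(u)$ of $K_{d+1}$; since $F_k$ is a matching they must coincide, forcing $c(v) = c(w)$. But $v$ and $w$ are distinct neighbors of $u$, so the distance-two coloring $c$ must assign them different colors, a contradiction. Hence $\phi$ is a proper $d$-edge-coloring of $G$.

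The only nontrivial ingredient is the $1$-factorization of $K_{d+1}$, which is exactly where the parity hypothesis on $d$ enters: $K_n$ is $(n-1)$-edge-colorable iff $n$ is even, matching the target palette of $d$ colors precisely. Everything else is bookkeeping, so I do not anticipate any real obstacle; if one wanted to be concrete, one could even take $F_k$ to be the standard ``rotational'' $1$-factorization of $K_{d+1}$, under which $\phi(uv)$ becomes an explicit function of $c(u) + c(v) \pmod{d+1}$.
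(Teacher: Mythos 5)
Your proposal is correct and is essentially the paper's own argument: both fix a $d$-edge-coloring ($1$-factorization) of the even complete graph $K_{d+1}$ and color each edge $uv$ of $G$ by the class of the pair $\{c(u),c(v)\}$, using the distance-two property to verify properness. You merely spell out the verification that the paper leaves as ``easy to see.''
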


\begin{proof}
The even complete graph $K_{d+1}$ can be edge-colored with $d$ colors by the Walecki 
construction~\cite{wal}. We fix one such coloring $c$, and then consider a distance-two $(d+1)$-coloring
of $G$. If an edge $uv$ of $G$ has colors $ab$ at its endpoints, we color $uv$ in $G$ with the color $c(ab)$. 
It is easy to see that this yields an edge-coloring of $G$ with $d$ colors.
\end{proof}

We call the resulting edge-coloring of $G$ the {\em derived edge-coloring} of the original distance-two coloring.

In this paper, we mostly focus on the case $d=3$ (the {\em subcubic} case). Thus we use the edge-coloring 
of $K_4$ by colors red, blue, green. This corresponds to the unique partition of $K_4$ into perfect matchings. 
Note that for every vertex $v$ of $K_4$ and every edge-color $i$, there is a unique other vertex $u$ of $K_4$ 
adjacent to $v$ in edge-color $i$. Thus if we have the derived edge-coloring we can efficiently recover the 
original distance-two coloring. In the subcubic case, in turns out to be sufficient to have just one color
class of the edge-coloring of $G$.

\begin{theorem}\label{twotwo}
Let $G$ be a subcubic graph, and let $R$ be a set of red edges in $G$.
The question of whether there exists a distance-two four-coloring of $G$
for which the derived edge-coloring has $R$ as one of the three color classes
can be solved by a polynomial time algorithm. If the answer is positive, the
algorithm will identify such a distance-two coloring.
\end{theorem}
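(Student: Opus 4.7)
The plan is to reduce the problem to an $\mathbb{F}_2$-linear system. Identify $V_4$ with $\mathbb{F}_2^2$ so that the three edge-colors of the Walecki decomposition of $K_4$ become $R = (1, 0)$, $B = (0, 1)$, $G = (1, 1)$. As noted in the paragraph preceding the theorem, a distance-two four-coloring $\phi\colon V(G) \to V_4$ is the same as a proper $3$-edge-coloring $c\colon E(G) \to \{R, B, G\}$ (the three edges at each degree-three vertex receiving the three different colors) that is \emph{cycle-consistent}, meaning $\sum_{e \in C} c(e) = (0,0)$ in $V_4$ for every cycle $C$ of $G$; given such a $c$, the vertex coloring is recovered in each connected component by fixing $\phi(v_0) = (0,0)$ at a basepoint and setting $\phi(v) = \sum_{e \in P} c(e)$ along any $v_0$-$v$ path $P$.

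First I verify local feasibility of $R$. A proper edge-coloring must use exactly one red edge at every degree-three vertex, so if $R$ is not a matching or misses some degree-three vertex, output ``no.'' Otherwise $G' := G - R$ has maximum degree at most two, and is a disjoint union of paths and cycles; a proper $\{B,G\}$-coloring of $G'$ exists iff every cyclic component of $G'$ has even length, so output ``no'' if any such component is odd. Once these checks pass, propriety of the full edge-coloring is automatic at every vertex, and each component $G'_i$ of $G'$ admits exactly two proper $\{B,G\}$-colorings, interchanged by the swap $B \leftrightarrow G$. I encode this choice by a variable $x_i \in \mathbb{F}_2$, with $x_i = 0$ denoting an arbitrary default coloring of $G'_i$.

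It remains to enforce cycle consistency; it suffices to do this on the fundamental cycles $\{C_f : f \notin T\}$ of a spanning tree $T$ of $G$. Because a $B \leftrightarrow G$ flip on one edge changes its color by $B + G = R = (1, 0)$, the second coordinate of $\sum_{e \in C_f} c(e)$ is independent of all $x_i$ and equals $|E(C_f) \setminus R| \bmod 2$; if this parity is nonzero for any $C_f$ then no valid coloring exists. The first coordinate yields the linear equation
\[
  \sum_i \bigl(|E(G'_i) \cap E(C_f)| \bmod 2\bigr)\, x_i \;\equiv\; b_f \pmod{2},
\]
where $b_f$ is the first coordinate of the cycle sum in the default coloring. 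Collecting these equations over all fundamental cycles yields an $\mathbb{F}_2$-linear system $Ax = b$, which I solve in polynomial time by Gaussian elimination. An inconsistent system certifies that no distance-two four-coloring with red class $R$ exists; otherwise any solution $x$ determines $c$, from which $\phi$ is read off by propagation.

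The key observation—the one non-routine step—is that flipping $B$ and $G$ on a component $G'_i$ changes the contribution of each affected edge to a cycle sum by the \emph{fixed} group element $R$, so cycle consistency becomes $\mathbb{F}_2$-linear in the $x_i$'s. Without this linearity the problem could a priori be considerably harder; with it, the remainder of the argument is careful bookkeeping about components of $G'$, fundamental cycles of $G$, and propagation of $\phi$ from the derived edge-coloring.
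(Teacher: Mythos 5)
Your proof is correct, but it takes a genuinely different route from the paper's. Both arguments begin the same way (the red set must be a matching covering every degree-three vertex, and there is a forced ``parity'' layer: in your coordinates the second component of each cycle sum is $|E(C)\setminus R| \bmod 2$ independently of all choices, which is exactly the paper's propagation of odd/even color parity along non-red edges). The divergence is in how the remaining binary freedom is handled. The paper keeps the variables at the vertices: after parities are fixed, each vertex has two candidate colors, and the paper builds an auxiliary graph $G'$ on $V(G)$ whose edges (red edges, plus pairs joined by a red-free $2$-path) are exactly the same-parity pairs that must receive distinct colors; solvability is then a bipartiteness test on $G'$, and a proper $2$-coloring of $G'$ directly yields the vertex coloring. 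You instead move entirely to the edge-coloring side: the variables are the $B\leftrightarrow G$ flips on the path/cycle components of $G-R$, and the constraints are cycle-consistency conditions on fundamental cycles, giving an $\mathbb{F}_2$-linear system solved by Gaussian elimination, after which the vertex coloring is recovered by propagation. The paper's method buys a simpler algorithm (BFS $2$-coloring rather than linear algebra) and a local certificate of infeasibility (an odd cycle in $G'$); yours buys a cleaner conceptual picture --- the coloring is a $\mathbb{F}_2^2$-valued potential whose coboundary is the edge-coloring, so obstruction lives in the cycle space --- and generalizes more readily. Two small points to keep tidy in your write-up: use a spanning forest and one basepoint per component when $G$ is disconnected, and note explicitly that properness at degree-two vertices (two non-red edges meeting there lie in the same component of $G-R$ and hence alternate) is also automatic, not just at degree-three vertices.
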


\begin{proof}
We may assume in $K_4$ red joins colors $13, 24,$ blue joins colors $12, 34$ and
green joins colors $14, 23$. Note that we may also assume that $R$ is a matching 
that covers at least all vertices of degree three, otherwise we answer in the negative.
We may further assume that some vertex $v$ gets an even color ($2$ or $4$). 
The parity of the color of a vertex $u$ determines the 
parity of the color of its neighbors, namely the parity is the same if they are adjacent 
by an edge in $R$, and they are of different parity otherwise. We may thus extend
from $v$ the assignment of parities to all the vertices, unless an inconsistency is 
reached, in which case no coloring exists. Otherwise, at this point all vertices have 
only two possible colors, namely $1, 3$ for odd and $2, 4$ for even.

Define an auxiliary graph $G'$ with vertices $V(G')=V(G)$, and edges $xy$ in $E(G')$ 
if $xy$ is a red edge in $E(G)$ or if there is a path $xzy$ without red edges in $E(G)$. 
Note that these edges $xy$ join vertices of the same parity, and $x, y$ must have different
colors. If $G'$ has an odd cycle, then no solution exists. Otherwise $G'$ is bipartite, and 
we may choose $1, 3$ in different sides of a bipartition of $G'$ for odd vertices, and $2, 4$ 
in different sides for even vertices.

Each vertex $u$ will have at most one neighbor $x$ of the same parity in $G$,
namely the one joined to it by the red edge, and $ux$ is an edge of $G'$. This 
guarantees different colors for $u, x$. The at most two other neighbors $y, z$ 
of $u$ have different parity from $u, x$, and the path $yuz$ in $G$ ensures the
edge $yz$ is in $G'$. This guarantees different colors for $y, z$. Thus the colors
for $u, x, y, z$ are all different at each vertex $u$, and we have a distance-two coloring
of $G$.
\end{proof}

There is also a relation to face-colorings. It is a folklore fact that the faces of any bipartite cubic plane graph $G$
can be three-colored \cite{ore}. This three-face-coloring induces a three-edge-coloring of $G$ by coloring each 
edge by the color not used on its two incident faces.
(It is easy to see that this is in fact an edge-coloring, i.e., that incident
edges have distict colors.) We call an edge-coloring that arises this way from some
face-coloring of $G$ a {\em special three-edge-coloring} of $G$. We first ask 
when is a special three-edge-coloring of $G$ the derived edge-coloring of a 
distance-two four-coloring of $G$.

\begin{theorem}\label{threethree}
A special three-edge-coloring of $G$ is the derived edge-coloring of some 
distance-two four-coloring of $G$ if and only if the size of each face is a 
multiple of $4$.
\end{theorem}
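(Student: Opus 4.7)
The plan is to translate the problem into the language of the Klein four-group. Identify the three edge-colors $\{R,B,G\}$ with the three non-identity elements of $V_4$ acting on $[4]=\{1,2,3,4\}$ as the double transpositions $\sigma_R=(1\,3)(2\,4)$, $\sigma_B=(1\,2)(3\,4)$, $\sigma_G=(1\,4)(2\,3)$, matching the $K_4$-coloring convention fixed in Theorem~\ref{twotwo}. The crucial observation is that a vertex coloring $\phi:V(G)\to[4]$ has a given three-edge-coloring $c$ as its derived edge-coloring if and only if $\phi(v)=\sigma_{c(uv)}(\phi(u))$ for every edge $uv$; note also that each $\sigma_c$ is fixed-point-free and that $\sigma_{c'}\sigma_{c''}=\sigma_c$ whenever $\{c',c'',c\}=\{R,B,G\}$.

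For necessity, assume a conforming $\phi$ exists and let $F$ be a face with face-color $c$ and boundary $v_1,\ldots,v_{2k}$. Its bounding edges alternate between the two non-$c$ edge colors, so traversing the boundary cycle yields $\phi(v_1)=\sigma_c^{\,k}\phi(v_1)$; since $\sigma_c$ is a fixed-point-free involution, this forces $k$ to be even, i.e.\ $|F|\equiv 0\pmod 4$. For sufficiency, assume every face size is a multiple of $4$, fix a vertex $v_0$, set $\phi(v_0)=1$, and for any other vertex $v$ define $\phi(v)$ by composing the involutions $\sigma_{c(\cdot)}$ along the edges of a path from $v_0$ to $v$. Well-definedness reduces, via planarity, to checking that the composition around each face boundary is trivial, but a face of color $c$ and size $4k$ contributes $\sigma_c^{\,2k}=e$. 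To verify that the resulting $\phi$ is a distance-two four-coloring, use that every $\sigma_c$ is fixed-point-free (so adjacent pairs get distinct colors) and that $\sigma_{c_1}\sigma_{c_2}=\sigma_{c_3}$ for distinct $c_1,c_2$, again fixed-point-free (so any two vertices at distance two through a common neighbor also get distinct colors). By construction the derived edge-coloring of $\phi$ equals the given $c$.

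The main obstacle I anticipate is conceptual rather than computational: spotting the Klein four-group already lurking in the $K_4$-coloring convention of Theorem~\ref{twotwo}. Once the involutions $\sigma_R,\sigma_B,\sigma_G$ are identified, both directions reduce to the identity $\sigma_c^{\,2}=e$ combined with the standard fact that the cycle space of a plane graph is generated by its face boundaries.
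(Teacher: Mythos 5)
Your proof is correct and follows essentially the same route as the paper's: the necessity argument is the paper's ``edges around a face alternate, and the vertices can be colored consistently iff the size is a multiple of $4$'' made explicit via the identity $\sigma_{c'}\sigma_{c''}=\sigma_c$ and $\sigma_c^2=e$, and your sufficiency argument (propagate from a base vertex, reduce well-definedness to face boundaries generating the cycle space) is exactly the paper's induction on the number of faces inside a cycle. If anything, your version is slightly more complete, since you also verify explicitly that the resulting $\phi$ is a distance-two four-coloring, a point the paper leaves implicit.
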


\begin{proof}
The edges around a face $f$ alternate in colors, and the vertices of $f$ can
be colored consistently with this alternation if and only if the size of $f$
is a multiple of 4. This proves the ``only if'' part. For the ``if'' part, suppose
all faces have size multiple of $4$. If there is an inconsistency, it will
appear along a cycle $C$ in $G$. If there is only one face inside $C$,
there is no inconsistency. Otherwise we can join some two vertices of $C$ by
a path $P$ inside $C$, and the two sides of $P$ inside $C$ give two regions
that are inside two cycles $C', C''$. The consistency of $C$ then follows from
the consistency of each of $C', C''$ by induction on the number of faces inside
the cycle.
\end{proof}

\begin{cor}
Let $G$ be a cubic plane graph in which the size of each face is a multiple of four.
Then $G$ can be distance-two four-colored.
\end{cor}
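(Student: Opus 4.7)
The plan is to chain the previous results of this section. First, I would observe that the hypothesis forces $G$ to be bipartite: every face of $G$ has even size, since it is a multiple of four, and in a connected plane graph any cycle is an edge-symmetric difference of face boundaries, so every cycle must itself be of even length. A disconnected $G$ is handled componentwise. Thus $G$ is a bipartite cubic plane graph.

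Next, I would invoke the folklore fact cited in the paragraph preceding Theorem \ref{threethree} (attributed to \cite{ore}): the faces of any bipartite cubic plane graph admit a proper three-coloring. Labelling each edge by the unique color not appearing on its two incident faces produces a special three-edge-coloring of $G$ in the sense defined there.

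Finally, I would apply Theorem \ref{threethree}: because every face of $G$ has size divisible by four, this special three-edge-coloring is the derived edge-coloring of some distance-two four-coloring of $G$, which is exactly the conclusion desired. The actual distance-two coloring can be recovered from the derived edge-coloring by the remark following Theorem \ref{twotwo}.

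No step presents a genuine obstacle; the argument is just a direct composition of the folklore face three-coloring result with Theorem \ref{threethree}, resting on the elementary observation that all-even face sizes force bipartiteness. If anything is worth dwelling on, it is checking that the bipartiteness argument goes through regardless of connectivity assumptions (we are not assuming $G$ is tri-connected, or even $2$-connected), but this is handled by the blockwise/componentwise reduction mentioned above.
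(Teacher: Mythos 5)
Your proof is correct and matches the route the paper intends: the corollary is stated in the paper without proof precisely because it is the composition of the all-even-faces-implies-bipartite observation, the folklore three-face-coloring of bipartite cubic plane graphs, and Theorem \ref{threethree}. Your extra care about connectivity (and the remark that the coloring is recoverable from the derived edge-coloring) is a reasonable bonus but does not change the argument.
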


We now prove a special case of the conjecture stated in the introduction, that
all bipartite cubic plane graphs can be distance-two six-colored. Recall that the
faces of any bipartite cubic plane graph can be three-colored.

\begin{theorem}\label{coze}
Suppose the faces of a bipartite cubic plane graph $G$ are three-colored red, 
blue and green, so that the red faces are of arbitrary even size, while the size
of each blue and green face is a multiple of $4$. Then $G$ can be distance-two 
six-colored.
\end{theorem}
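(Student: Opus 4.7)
The plan is to extend the $K_4$-based construction of Theorem~\ref{threethree} by introducing two auxiliary colors $5$ and $6$ to absorb the parity obstruction coming from red faces whose size is not a multiple of $4$. Starting from the three-face-coloring, I would first derive the special three-edge-coloring: each edge receives the face-color not used at either of its two incident faces, so red, blue, and green edges each form a perfect matching, and the boundary of every red face is an alternating blue/green cycle. With the $K_4$ convention red $=(13)(24)$, blue $=(12)(34)$, green $=(14)(23)$, Theorem~\ref{threethree} tells us that the $K_4$-propagation rule closes up consistently around every face of size divisible by $4$; in the present setting every blue or green face qualifies, so the only obstructions come from \emph{bad} red faces, those of size $\equiv 2 \pmod 4$, around whose boundary the $K_4$-product equals $(13)(24)$ rather than the identity.

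For each bad red face $F$, I would then choose two adjacent \emph{seam} vertices on $\partial F$ and color them $5$ and $6$, using $K_4$ on the remaining vertices of $G$. Along $\partial F$ the resulting pattern $\ldots, 3, 4, 5, 6, 1, 2, \ldots$ satisfies the distance-two condition, since any three consecutive colors are distinct, and the seam colors $5, 6$ automatically differ from every $K_4$-color that appears elsewhere. The only nontrivial check at a seam vertex concerns its red-edge neighbor, which is colored by the global $K_4$-propagation elsewhere: that neighbor must avoid just three forbidden values (the seam color and the two cycle-neighbor colors on $\partial F$), leaving three admissible values from the six, so a valid choice always exists.

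The main obstacle is to arrange the seams globally so that the $K_4$-propagation on the non-seam subgraph of $G$ is consistent. Here I would exploit a parity observation: since every vertex lies in exactly one blue face and every blue face has size divisible by $4$, we have $V = \sum_{F \text{ blue}} |F| \equiv 0 \pmod 4$, from which one deduces that the number of bad red faces is even. I would then pair up the bad red faces and place each pair's seams so that both $(13)(24)$ contributions lie in a common merged region of the non-seam subgraph, where they cancel. Trivializing the $K_4$-cocycle in this way produces a valid $K_4$-coloring on the non-seam subgraph, and combined with the colors $5$ and $6$ at the seams it yields the desired distance-two six-coloring. The delicate part will be showing that such a pairing can always be realized by routing through blue and green faces in the planar dual; I expect this to follow from the rigidity provided by the multiple-of-$4$ condition on blue and green face sizes.
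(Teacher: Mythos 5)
Your approach is genuinely different from the paper's, but it has a real gap at its central step, the global cancellation of the monodromy defects. Your parity count showing that the number of bad red faces (size $\equiv 2 \pmod 4$) is even is correct, and it is true that the product of two copies of $(13)(24)$ is the identity. However, a two-vertex seam on a bad red face $F$ only exempts those two vertices from the $K_4$-propagation; this merges $F$ with the (at most three) other faces incident to the seam vertices, and since two red faces are never adjacent in a proper face three-coloring, no other bad red face can lie in that merged region. The merged region therefore still carries monodromy $(13)(24)$, and the propagation on the rest of $G$ remains inconsistent: the obstruction has not been cancelled, only relocated to a slightly larger region. To genuinely merge two bad red faces that are far apart in the planar dual you would have to exempt a connected path of vertices joining them, and such a path inevitably contains three vertices pairwise within distance two, which cannot all be colored from the two spare colors $\{5,6\}$. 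Nothing in the hypothesis forces bad red faces to come in adjacent-in-the-dual pairs, so the ``routing'' you defer to the end is not a delicate detail but the missing heart of the argument. A secondary gap: at a seam vertex $u$ colored $5$ or $6$, the red-edge neighbor of $u$ and the non-seam face-neighbor of $u$ are both at mutual distance two and both have colors \emph{forced} by the global propagation (up to one overall choice); you assert a ``valid choice always exists,'' but there is no free choice to make, and no argument that the two forced colors differ.

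For contrast, the paper avoids the $K_4$ machinery entirely for this theorem. It contracts every red face to a point, observes that the resulting plane multigraph has all faces of even size (half the blue/green sizes) and is hence bipartite, orients the red edges across the bipartition, and uses the left/right position of the blue face along each oriented red edge to split $V(G)$ into two classes, with colors $\{1,2,3\}$ reserved for one class and $\{4,5,6\}$ for the other. The key structural fact is that each vertex has only three same-class vertices within distance two (its red-edge partner and its two distance-two neighbors along its red face), so each class induces a cubic conflict graph that is $3$-colorable by Brooks' theorem. That argument needs only that red faces are even and blue/green faces are multiples of $4$ (to get bipartiteness after contraction), and it sidesteps any cocycle-cancellation issue. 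If you want to salvage your line of attack, you would need either a proof that the bad red faces can always be paired along dual paths whose exempted vertices admit a legal coloring with the two extra colors, or a different local gadget whose exemption actually kills the $(13)(24)$ defect of a single face; as written, neither is established.
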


\begin{proof}
Let $G'$ be the multigraph obtained from $G$ by shrinking each of the red faces. 
Clearly $G'$ is planar, and since the sizes of blue and green faces in $G'$
are half of what they were in $G$, they will be even, so $G'$ is also bipartite.
Let us label the two sides of the bipartition as $A$ and $B$ respectively. 
Now consider the special three-edge coloring of $G$ associated with the
face coloring of $G$. Each red edge in this special edge-coloring joins a
vertex of $A$ with a vertex of $B$; we orient all red edges from $A$ to $B$. 
Now traversing each red edge in $G$ in the indicated orientation either has 
a blue face on the left and green face on the right, or a green face on the left 
and blue face on the right. In the former case we call the edge {\em class one}
in the latter case we call it {\em class two}. Each vertex of $G$ is incident
with exactly one red edge; the vertex inherits the class of its red edge. The 
vertices around each red face in $G$ are alternatingly in class 1 and class 2. 
We assign colors $1, 2, 3$ to vertices of class one and colors $4, 5, 6$ to vertices 
of class two. It remains to decide how to choose from the three colors available 
for each vertex. A vertex adjacent to red edges in class $i$ has only three vertices 
within distance two in the same class, namely the vertex across the red edge, 
and the two vertices at distance two along the red face in either direction.
Therefore distance-two coloring for class $i$ corresponds to three-coloring 
a cubic graph. Since neither class can yield a $K_4$, such a three-coloring
exists by Brooks' theorem~\cite{br}. This yields a distance-two six-coloring 
of $G$.
\end{proof}

\section{Distance-two four-coloring of type-one Barnette graphs is NP-complete}

We now state our main intractability result. 

\begin{theorem}\label{tuza}
The distance-two four-coloring problem for tri-connected bipartite cubic planar graphs 
is NP-complete.
\end{theorem}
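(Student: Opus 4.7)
The plan is to reduce from the distance-two 4-coloring problem on arbitrary cubic planar graphs, shown NP-complete by Heggerness and Telle \cite{ht}. Membership in NP is immediate: a candidate coloring can be verified by examining every pair of vertices at graph distance at most two. For hardness, given a cubic planar graph $G_0$, we construct in polynomial time a bipartite tri-connected cubic planar graph $H$ such that $G_0$ admits a distance-two 4-coloring if and only if $H$ does.

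The construction will be gadget based. Each edge $e=uv$ of $G_0$ is to be replaced by a small bipartite cubic planar piece $\Gamma_e$ with two ``interface'' vertices playing the roles of $u$ and $v$. The defining property we want from $\Gamma_e$ is that in any distance-two 4-coloring of $H$ the pair of colors appearing at its interface vertices can be any pair of distinct colors, and conversely every such pair extends consistently over the gadget. A natural way to produce such a gadget is to design a small bipartite cubic structure together with a distinguished red matching crossing it; Theorem~\ref{twotwo} then reduces the verification of the extension property to a finite computation on the gadget. The lengths of the internal paths of $\Gamma_e$ are chosen so that, when all the gadgets are assembled around $G_0$, every face boundary of the resulting plane graph $H$ has even length, guaranteeing bipartiteness.

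The main obstacle is tri-connectivity: a local edge-by-edge substitution automatically creates 2-vertex cuts at each gadget interface, so naive replacement cannot succeed. We plan to work face-by-face instead: for each face $f$ of $G_0$ we use a single coherent ``face gadget'' that realizes the edge-transmission constraints of all edges on $\partial f$ simultaneously, and glues to its neighboring face gadgets along at least three pairwise internally disjoint paths. Any 2-cut of the resulting graph would then have to lie inside a single face gadget, where it can be ruled out by direct inspection of a finite case. Simultaneously controlling cubicity, planarity, bipartiteness, tri-connectivity, and the color-transmission property within one finite family of gadgets is the most delicate part of the argument, and we expect it to require careful small-case analysis before a working gadget is found. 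Once such a gadget is in place, both directions of the reduction follow by gluing or restricting colorings one gadget at a time, using the interface property repeatedly.
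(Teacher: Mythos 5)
There is a genuine gap at the heart of your reduction: the interface property you require of the edge gadget $\Gamma_e$ --- that in any distance-two four-coloring the two interface vertices may receive \emph{any} pair of distinct colors, and that every such pair extends over the gadget --- captures only the constraint ``adjacent vertices of $G_0$ get different colors.'' That is ordinary proper coloring, and every cubic (planar) graph is properly four-colorable, so a reduction built on this property would map every instance to a yes-instance and cannot be a reduction from distance-two four-coloring. The deeper problem is structural: replacing the edge $uv$ by a gadget destroys the distance-two relations that the edge created in $G_0$, namely that the three neighbors of $u$ must be pairwise distinctly colored and each must differ from the color of $v$'s side as seen through $u$. To preserve the problem you would need gadgets that transmit the entire local color configuration around each vertex (a vertex gadget coordinating three edge gadgets), not merely a pair of distinct colors across one edge; you never specify such a mechanism, and it is exactly the hard part. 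The face-by-face plan for tri-connectivity correctly identifies the $2$-cut obstruction but remains a declaration of intent rather than a construction.

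The paper avoids this trap by not reducing from distance-two four-coloring at all. It reduces from planar three-colorability, reformulated as $H$-coloring where the vertices of $H$ are the unordered pairs from $\{1,2,3,4\}$ and adjacency means the pairs share exactly one element. Each vertex $v$ of the instance is replaced by a \emph{ring} of squares whose every distance-two four-coloring is rigid enough to define a ``characteristic pair'' of colors, and each edge $vw$ is replaced by a connector vertex $f_{vw}$ whose color is forced to lie in both characteristic pairs, so adjacent rings must carry pairs meeting in exactly one element --- precisely the $H$-coloring condition. This gives NP-completeness for bipartite planar \emph{subcubic} graphs; cubicity and tri-connectivity are then obtained by enlarging the edge gadget with a rigid subgraph of the Goodey graph $C_1$ (which has a unique distance-two four-coloring) and by starting from a biconnected instance of planar three-coloring. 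If you want to salvage your approach, the lesson is that the reduction must encode a \emph{pair} (or richer local data) at each vertex via a rigid gadget, rather than a single color at an interface vertex.
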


We will begin by deriving a weaker version of our claim.

\begin{theorem}\label{here}
The distance-two four-coloring problem for bipartite planar subcubic graphs
is NP-complete.
\end{theorem}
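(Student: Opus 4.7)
The plan is to prove NP-completeness by a polynomial-time reduction from a known NP-complete problem; the natural source is the distance-two four-coloring problem for cubic planar graphs, which is NP-complete by Heggerness and Telle~\cite{ht}. Membership in NP is immediate, since a claimed coloring is verifiable in polynomial time by examining all pairs of vertices at distance at most two. What remains is to convert a cubic planar graph $H$ into a bipartite planar subcubic graph $G$ in polynomial time so that $H$ admits a distance-two four-coloring if and only if $G$ does.

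I would build $G$ from $H$ by gadget-replacement. Each edge $uv$ of $H$ would be replaced by a small planar gadget with two ``terminals'' attached to $u$ and $v$, and, if necessary, each vertex of $H$ would be replaced by a gadget with three terminals. Each gadget must be subcubic, planar (so that the resulting $G$ is planar), contain only even cycles (so that $G$ is bipartite), and impose distance-two constraints at its terminals that exactly reproduce the constraints coming from the corresponding edge or cubic vertex of $H$. A naive first attempt is to subdivide each edge of $H$ once: this already yields a bipartite planar subcubic graph, but it destroys the distance-two constraints between vertices that were at distance two in $H$, since their common neighbor is now separated from each of them by a subdivision vertex, placing them at distance four in $G$. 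The gadget must therefore carry extra bipartite structure that reintroduces precisely those lost constraints.

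The main obstacle is the design of such a gadget. Bipartiteness is especially restrictive: since two vertices at distance two in a bipartite graph lie on the same side of the bipartition, the three terminals that must receive pairwise distinct colors (to simulate a cubic vertex of $H$) all live on one side of the gadget's bipartition, yet the gadget must still be distance-two four-colorable, planar, and subcubic. Once suitable gadgets are exhibited, correctness will follow from a finite case check that every distance-two four-coloring of the terminals coming from a valid coloring of $H$ extends through the gadgets to all of $G$, and conversely that every valid coloring of $G$ restricts to such a coloring of $H$.
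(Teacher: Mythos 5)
Your proposal is a reduction plan, not a proof: the entire technical content of the theorem is the gadget you defer to ``once suitable gadgets are exhibited,'' and you correctly identify why exhibiting one is hard without doing so. The difficulty is not merely bipartiteness. A gadget of the kind you describe would have to transmit a pure \emph{inequality} constraint (``these two terminals receive different colors'') through a bipartite planar subcubic structure using only four colors. But with four colors and maximum degree three, the distance-two constraints propagate very rigidly: around a quadrilateral face all four colors are forced, and along chains of such faces colors are forced to \emph{repeat}, so the natural bipartite gadgets transmit equality of colors (or equality of an unordered pair of colors), not inequality. There is no obvious ``not-equal transmitter,'' and your correctness argument (``a finite case check that every coloring of the terminals extends'') cannot even be begun until one is produced. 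As written, the argument establishes only NP membership and the (correct but uncontroversial) observation that subdividing edges destroys the constraints you need.

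The paper's proof sidesteps exactly this obstacle by changing the source problem. It reduces from planar $3$-colorability, phrased as $H$-coloring where $H$ is the graph on the six $2$-element subsets of $\{1,2,3,4\}$ with adjacency given by intersecting in exactly one element. Each vertex $v$ of the source graph becomes a bipartite ring of squares whose distance-two four-colorings are classified by a ``characteristic pair'' of colors (an arbitrary $2$-subset of $\{1,2,3,4\}$), i.e., the gadget carries a $6$-valued piece of information rather than a single color. A connector vertex $f_{vw}$ attached to the rings of $v$ and $w$ forces their characteristic pairs to share exactly one element, which is precisely adjacency in $H$. In other words, the rigidity of forced color propagation in bipartite subcubic graphs --- the very thing that blocks your inequality gadget --- is exploited to make the ring's state well defined, and the constraint being simulated is chosen to be one that such gadgets \emph{can} express. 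If you want to keep your source problem (distance-two four-coloring of cubic planar graphs, NP-complete by~\cite{ht}), you would need to either construct the missing inequality-transmitting gadget or argue it cannot exist; the paper's route avoids the question entirely.
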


\begin{proof}
Consider the graph $H$ in Figure \ref{ftent}.

\begin{figure}[h!]
\begin{center}
\includegraphics[height=4cm]{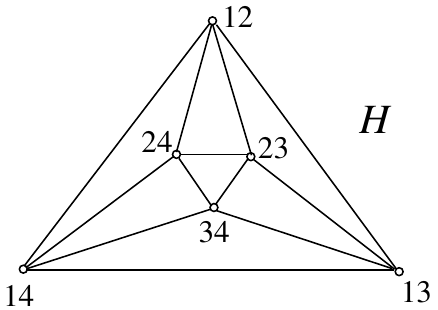}
\end{center}
\caption{The graph $H$ for the proof of Theorem \ref{here}}
\label{ftent}
\end{figure}

We will reduce the problem of $H$-coloring planar graphs to the distance-two 
four-coloring problem for bipartite planar subcubic graphs. In the {\em $H$-coloring
problem} we are given a planar graph $G$ and and the question is whether we
can color the vertices of $G$ with colors that are vertices of $H$ so that adjacent 
vertices of $G$ obtain adjacent colors. This can be done if and only if $G$ is
three-colorable, since the graph $H$ both contains a triangle and is three-colorable
itself. (Thus any three-coloring of $G$ is an $H$-coloring of $G$, and any $H$-coloring
of $G$ composed with a three-coloring of $H$ is a three-coloring of $G$.) It is known
that the three-coloring problem for planar graphs is NP-complete, hence so is the
$H$-coloring problem.

\begin{figure}[h!]
\begin{center}
\includegraphics[height=5cm]{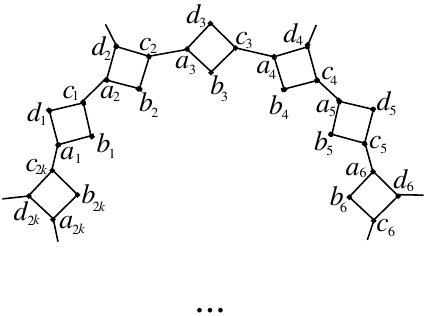}
\end{center}
\caption{The ring gadget}
\label{ring}
\end{figure}

Thus suppose $G$ is an instance of the $H$-coloring problem. We form a new graph
$G'$ obtained from $G$ by replacing each vertex $v$ of $G$ by a {\em ring} gadget depicted
in Figure \ref{ring}. If $v$ has degree $k$, the ring gadget has $2k$ squares. A  {\em link} in 
the ring is a square $a_ib_ic_id_ia_i$ followed by the edge $c_ia_{i+1}$. A link is {\em even}
if $i$ is even, and {\em odd} otherwise. Every even link in the ring will be used for a connection
to the rest of the graph $G'$, thus vertex $v$ has $k$ available links. For each edge $vw$
of $G$ we add a new vertex $f_{vw}$ that is adjacent to a vertex $d_s$ in one available
link of the ring for $v$ and a vertex $d'_t$ in one available link of the ring for $w$. (We use
primed letters for the corresponding vertices in the ring of $w$ to distinguish them from those
in the ring of $v$.) The actual choice of (the even) subscripts $s, t$ does not matter, as long 
as each available link is only used once. The resulting graph is clearly subcubic and planar.
It is also bipartite, since we can bipartition all its vertices into one independent set $A$ 
consisting of all the vertices $a_i, c_i, b_{i+1}, d{i+1}$ with odd $i$ in all the rings, and 
another independent set $B$ consisting of the vertices $a_i, c_i, b_{i+1}, d{i+1}$ with 
even $i$ in all the rings. Moreover, we place all vertices $f_{vw}$ into the set $A$.
Note that in any distance-two four-coloring of the ring, each link must have four different 
colors for vertices $a_i, b_i, c_i, d_i$, and the same color for $a_i$ and $a_{i+1}$. Thus all
$a_i$ have the same color and all $c_i$ have the same color. The pair of colors in $b_i, d_i$
is also the same for all $i$; we will call it the {\em characteristic pair of the ring for $v$}.
For any pair $ij$ of colors from $1, 2, 3, 4$, there is a distance-two coloring of the ring that
has the characteristic pair $ij$.

We prove that $G$ is $H$-colorable if and only if $G'$ is distance-two four-colorable.
In an $H$-coloring of $G$, the vertices of $G$ are actually assigned unordered pairs 
from $\{1, 2, 3, 4\}$, since the vertices of $H$ are labeled by pairs. (Note that two 
vertices of $H$ are adjacent if and only if the pairs they are labeled with intersect 
in exactly one element.) Thus suppose that we have an $H$-coloring $\phi$ of $G$.
If $\phi(v)=ij$ (i.e., the vertex $v$ of $G$ is assigned the vertex of $H$ labeled by the pair 
$ij$), then we colour the ring of $v$ so that its characteristic pair is $ij$. This still leaves
a choice of which of the colors $i, j$ is in which $b_s, d_s$, in each of the links 
$a_s, b_s, c_s, d_s$. Since $\phi$ is an $H$ coloring, adjacent vertices $vw$ are
assigned pairs that intersect if exactly one element. This makes it possible to color
each $b_s, d_s$ so that all colors at distance at most two are distinct. For instance
if vertices $v$ and $w$ are adjacent in $G$ and colored by $12, 13$ by $\phi$, and
if $f_{vw}$ is adjacent to the vertices $d_s$ in the ring for $v$ and $d_t$ in the ring 
for $w$, then both $b_s$ in the ring for $v$ and $b_t$ in the ring for $w$ are colored 
$1$, as is $f_{vw}$, while $d_s$ in the ring for $v$ and $d_t$ in the ring for $w$ are 
colored $2$ and $3$ respectively. It is easy to see that this is a distance-two four-coloring
of $G'$.

Conversely, in any distance-two four-coloring of $G'$, the color of a vertex $f_{vw}$
determines the same color in the $b$'s of its adjacent links of the rings for $v$ and $w$,
whence the characteristic pairs of these two rings intersect in exactly one element. Thus
we may define a mapping $\phi$ of $V(G)$ to $V(H)$ by assigning to each vertex $v \in V(G)$
the characteristic pair of the ring for $v$. Then $\phi$ is an $H$-coloring of $G$, since adjacent 
vertices of $G$ are assigned pairs that are adjacent in $H$.
\end{proof}

\begin{figure}[htb!]
\includegraphics[width=6cm]{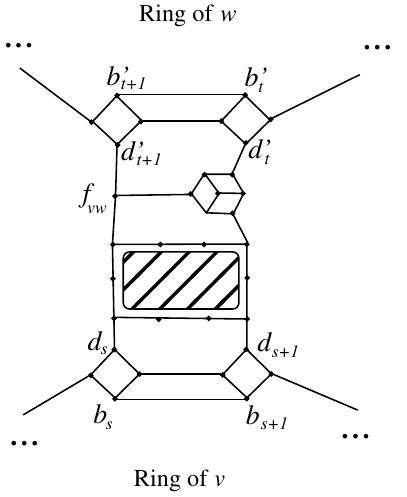}
\caption{The modified edge-gadget for $f_{vw}$}
\label{dump3}
\end{figure}

\vspace{3mm}

\begin{figure}[hbt!]
\includegraphics[width=6cm]{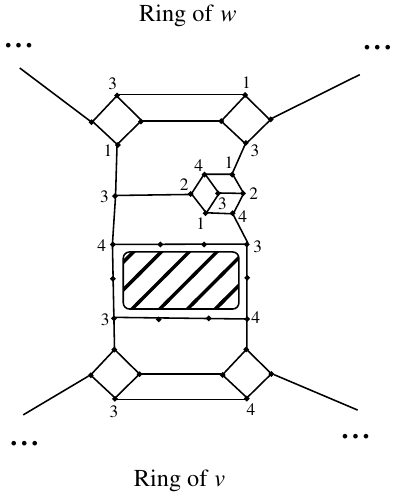}
\caption{With its unique distance-two coloring}
\label{dump2}
\end{figure}

To prove the full Theorem \ref{tuza}, the construction of the graph $G'$ is modified as suggested in Figure \ref{dump3}. 
Recall that in the construction of $G'$, for each edge $vw$ of $G$ a separate vertex $f_{vw}$ was 
made adjacent to $d_s$ in the ring of $v$ and $d'_t$ in the ring of $w$. Recall that both $s$ and $t$ 
are even, and the vertices $d_{s+1}, d'_{t+1}$ (with both subscripts odd) remained available for 
connection. We now make a new edge-gadget around the vertex $f_{vw}$, making it directly adjacent 
to $d'_{t+1}$, and connected to $d_s$ by a path, as depicted in Figure \ref{dump3}. In both rings, the 
two ``$b$" type vertices in the two consecutive links are joined together by an additional edge; 
specifically, we add the edges $b_sb_{s+1}$ and $b'_tb'_{t+1}$. 
(Note that this forces the corresponding ``$d$" type vertices $d_s$ and $d_{s+1}$ to be colored differently 
in any distance-two four-coloring, and similarly for $d'_t$ and $d'_{t+1}$). Moreover, further vertices and 
edges are added, as depicted in Figure \ref{dump3}. The shaded ten-sided region is identified with the 
ten-sided exterior face of the graph depicted in Figure \ref{fGoodey}, which has a unique distance-two 
four-coloring, shown there. (The heavy edges correspond to the ten-sided shaded figure.)
(Note that the graph in Figure \ref{fGoodey} was obtained from the graph in Figure \ref{Goodeyc1} 
by the deletion of two edges.)
Note that the construction is not symmetric, as it depends on which ring is viewed as the 
``bottom" ring for the vertex $f_{vw}$. (The depicted figure has the ring of $v$ on the bottom, 
but the conclusions are the same if it were the ring of $w$.) 
We can choose either way, independently for each edge $vw$ of $G$.
It can be seen that the resulting graph, which we denote by $G''$, is bipartite, planar, and cubic. 
We may assume that $G$ is bi-connected (the 
three-coloring problem for biconnected planar graphs is still NP-complete), and 
therefore $G''$ is also tri-connected (as no two faces share more than one edge). Using 
the unique distance-two four-colouring of the graph in Figure \ref{fGoodey}, it also follows that 
in any distance-two four-coloring of $G''$ the vertices $d_s$ and $d'_{t+1}$ have different
colors, while both vertices $b_s$ and $b'_{t+1}$ have the same color (the color of $f_{vw}$), 
in any distance-two four-coloring of $G''$. To facilitate checking this, we show in Figure \ref{dump2} 
a partial distance-two four-coloring, by circles, squares, up triangles, and 
down triangles; this coloring is forced by arbitrarily coloring $f_{vw}$ and its three neighbours
by four distinct colors. Since the colors of the pair $b_s, d_s$ and the pair $b'_{t+1}, d'_{t+1}$ 
have exactly one color in common, the previous NP-completeness proof applies, i.e., $G$
is $H$-colorable if and only if $G''$ is distance-two four-colorable.

\begin{figure}[h!]
\begin{center}
\includegraphics[height=4cm]{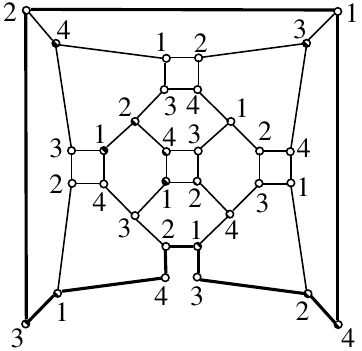}
\end{center}
\caption{The graph for the shaded region, with its unique distance-two four-coloring}
\label{fGoodey}
\end{figure}

We remark that (with some additional effort) we can prove that the problem is still NP-complete
for the class of tri-connected bipartite cubic planar graphs with no faces of sizes larger than $44$.

\section{Distance-two four-coloring of Goodey graphs}\label{dobra}

Recall that Goodey graphs are type-two Barnette graph with all faces of size $4$ and $6$ \cite{good,goodtwo}. 
In other words, a {\em Goodey graph} is a cubic plane graph with all faces having size $4$ or $6$. By Euler's 
formula, a Goodey graph has exactly six square faces, while the number of hexagonal faces is arbitrary. 

A {\em cyclic prism} is the graph consisting of two disjoint even cycles $a_1a_2\cdots a_{2k}a_1$ and
$b_1b_2\cdots b_{2k}b_1, k \geq 2,$ with the additional edges $a_ib_i$, $1\leq i\leq 2k$. It is easy to 
see that cyclic prisms have either no distance-two four-coloring (if $k$ is odd), or a unique distance-two 
four-coloring (if $k \geq 2$ is even). Only the cyclic prisms with $k=2, 3$ are Goodey graphs, and thus
from Goodey cyclic graphs only the cube (the case of $k=2$) has a distance-two coloring, which is 
moreover unique.

In fact, all Goodey graphs that admit distance-two four-coloring can be constructed from the cube as
follows. The Goodey graph $C_0$ is the cube, i.e., the cyclic prism with $k=2$. The Goodey graph $C_1$
is depicted in Figure \ref{Goodeyc1}. It is obtained from the cube by separating the six square faces and joining
them together by a pattern of hexagons, with three hexagons meeting at a vertex tying together the three
faces that used to meet in one vertex. The higher numbered Goodey graphs are obtained by making the 
connecting pattern of hexagons higher and higher. The next Goodey graph $C_2$ has two hexagons
between any two of the six squares, with a central hexagon in the centre of any three of the squares, 
the following Goodey graph $C_3$ has three hexagons between any two of the squares and three 
hexagons in the middle of any three of the squares, and so on. Thus in general we replace every vertex 
of the cube by a triangular pattern of hexagons whose borders are replacing the edges of the cube. We 
illustrate the vertex replacement graphs in Figure \ref{messy}, without giving a formal description. The 
entire Goodey graph $C_1$  is depicted in Figure \ref{Goodeyc1}.

\begin{figure}[h!]
\includegraphics[height=5.5cm]{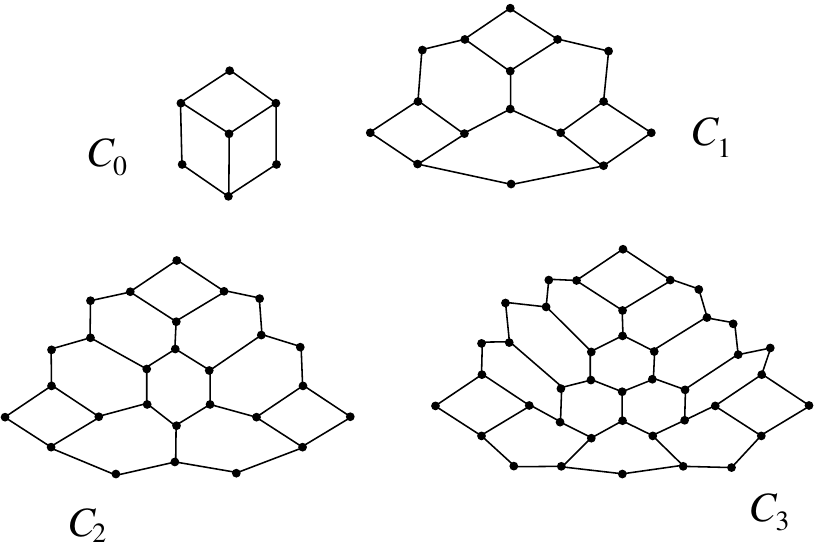}
\caption{The vertex replacements for Goodey graphs $C_0, C_1, C_2,$ and $C_3$}
\label{messy}
\end{figure}

\begin{figure}[h!]
\begin{center}
\includegraphics[height=4cm]{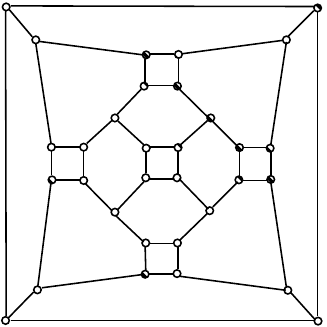}
\end{center}
\caption{The Goody graph $C_1$}
\label{Goodeyc1}
\end{figure}

We have the following results.

\begin{theorem}\label{cd}
The Goodey graphs $C_k, k\geq 0,$ have a unique distance-two four-coloring,
up to permutation of colors.
\end{theorem}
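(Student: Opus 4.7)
The plan is to establish uniqueness by propagating colors outward from a single fixed square, using three ingredients: the fact that every 4-face must carry all four colors, a local forcing rule for cubic graphs, and the rigid geometry of the hexagonal patches of $C_k$.

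I would first record the basic observation that in any distance-two four-coloring of a cubic graph, the four vertices of a face of size four receive all four colors: consecutive vertices are adjacent, while opposite vertices are at distance two across the face, so the four colors on a 4-face must be distinct. Combined with the remark in the paragraph preceding the theorem that the cyclic prism with $k=2$ has a unique distance-two four-coloring, this dispatches the base case $C_0$, which is precisely that cyclic prism.

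For $k \geq 1$, I would fix a distance-two four-coloring of $C_k$ and, after permuting colors, fix its restriction to one chosen square $S$. The local rule driving the propagation is that if all three neighbors of a degree-three vertex have known colors, then the vertex itself receives the unique remaining color. Starting at $S$ and moving outward through the incident hexagonal patches, I would show that each vertex newly added to the colored region is forced: sometimes by the three-neighbor rule, sometimes by completing a 4-face (using the basic observation above), and occasionally by a distance-two constraint across an already-colored path. Along each of the twelve edge strips between adjacent squares, this propagation terminates at the next square with all four of its colors determined.

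The hard part will be verifying consistency at the triangular vertex patches where three edge strips meet. For $C_1$ each patch consists of a single hexagon joined to its three neighboring hexagons, and consistency reduces at once to the uniqueness of the distance-two four-coloring of the configuration displayed in Figure \ref{fGoodey}, which is exactly this local pattern. For $k \geq 2$, each patch is a triangular array of hexagons acquiring one further concentric layer for each increment of $k$; I would argue by induction on the depth of the patch that, given its rigidly forced outer boundary, each successive inner layer inherits a uniquely determined coloring, so that the three independent propagations coming from the three adjacent squares converge consistently at the centre of the patch.
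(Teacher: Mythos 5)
Your plan -- fix the four colors on one square and show every other vertex is forced, propagating through the hexagon chains and then checking that the propagations agree on the triangular patches replacing the cube's vertices -- is exactly the strategy of the paper, which organizes the same forced propagation by decomposing each triangular region into horizontal paths $P_i$ whose colorings are forced to be initial segments of periodic patterns such as ${(1234)}^*$. The consistency step you flag as the hard part is precisely what the paper's row-by-row analysis carries out, so the two arguments coincide in substance and differ only in bookkeeping.
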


\begin{proof}
We described $C_k$ as eight triangular regions $R$, each consisting of
$k \choose 2$ hexagons, one region $R$ for each vertex of the cube. Each 
$R$ has three squares at the corners, which we describe as two squares 
joined by a chain of $k$ hexagons horizontally at the bottom, and a third 
square on top. (See Figure \ref{messy}.)

We partition the vertices into $k+2$ horizontal paths $P_i$, $0\leq i\leq k+1$,
with each $P_i$ having endpoints of degree 2 and internal vertices of degree 3.
The path $P_0$ has length $2k+2$, and the remaining paths $P_i$, $i\geq 1$
have length $2k+6-2i$. In particular the last $P_{k+1}$ has length 4, and is
the only $P_i$ that is actually a cycle, pictured as the square at the top. See
Figure \ref{messay2}.

\begin{figure}[h!]
\includegraphics[height=7cm]{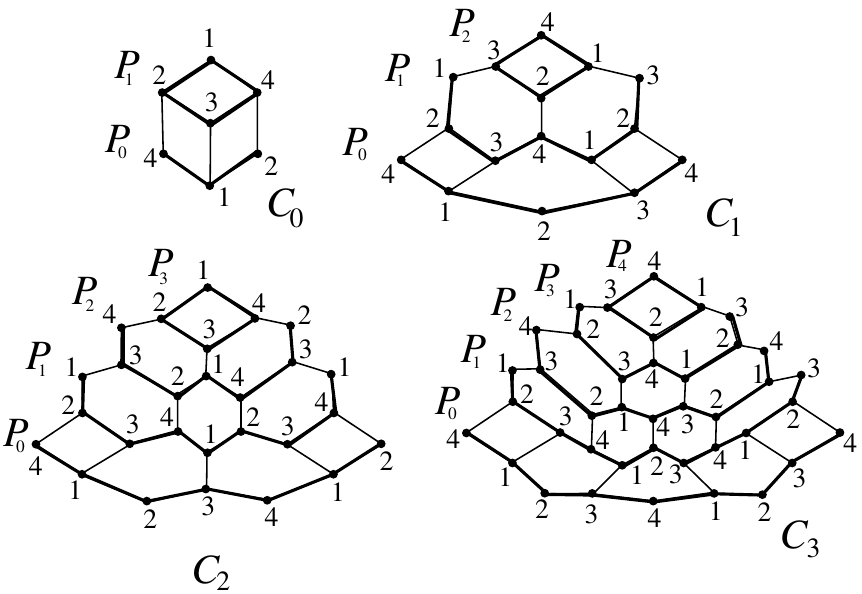}
\caption{The paths $P_i$ and the resulting distance-two colorings}
\label{messay2}
\end{figure}

We denote $P_i=v_i^0 v_i^1\cdots v_i^{\ell_i}$. The edges between $P_0$ and
$P_1$ are $v_0^0 v_1^1$, $v_0^j v_1^{j+1}$ for $1\leq j\leq \ell_0-1$, $j$
odd, and $v_0^{\ell_0} v_1^{\ell_1-1}$. We can choose the permutation of
colors for the square $v_0^0 v_0^1 v_1^2 v_1^1$ to be $4132$, forcing
for neighbors $v_1^0,v_1^3,v_0^2$ the colors $1,4,2$, and completing the
adjacent square, or hexagon with the assignment to $v_1^4,v_0^3$ of colors
$1,3$. This forced process extends similarly through the chain of hexagons
until the last square.

We have derived the beginning of $P_0$ as $4123$ and the beginning of $P_1$
as $12341$. After the forced extension, $P_0$ will be an initial segment
of ${(4123)}^*$ and $P_1$ will be an initial segment
of ${(1234)}^*$.

For $i\geq 1$, the edges between $P_i$ and
$P_{i+1}$ are $v_i^0 v_{i+1}^1$,
$v_i^j v_{i+1}^{j-1}$ for $3\leq j\leq \ell_i-3$, $j$
odd, and $v_i^{\ell_i} v_{i+1}^{\ell_{i+1}-1}$.

A similar process derives the beginning of $P_i$ for $i$ odd as $1234$
and the beginning of $P_i$ for $i$ even
as $4321$. After the forced extension, $P_i$ for $i$ odd
will be an initial segment
of ${(1234)}^*$ and $P_i$ for $i$ even will be an initial segment
of ${(4321)}^*$.

This gives a unique coloring for the triangular region after coloring one
square $S$, which is uniquely extended to the four triangular regions
surrounding $S$, and then uniquely extended to the four triangular regions
surrounding $S'$ opposite to $S$.
\end{proof}

\begin{theorem}\label{veta}
\label{gd}
The Goodey graphs $C_k, k\geq 0,$ are the only bipartite cubic planar graphs
having a distance-two four-coloring.
\end{theorem}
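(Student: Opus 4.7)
Let $G$ be a Goodey graph admitting a distance-two four-coloring $\chi$; we aim to show $G = C_k$ for some $k \geq 0$. Recall from the start of this section that every Goodey graph has exactly six square faces. The plan is to pin down the coloring on one square and then show, by a forced-propagation argument, that the structure of $G$ around that square is determined level by level and must match the construction of $C_k$.

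Fix a square $S = a_0 a_1 a_2 a_3$ and observe that any two of its vertices are at distance at most two, so $\chi$ uses all four colors on $S$; after permuting colors we may take $\chi(a_i) = i+1$. For each $i$, the external neighbor $a_i'$ is within distance two of $a_{i-1}, a_i, a_{i+1}$, and so its color is forced to be the unique value not in $\{\chi(a_{i-1}), \chi(a_i), \chi(a_{i+1})\}$; this gives $\chi(a_0'), \chi(a_1'), \chi(a_2'), \chi(a_3') = 3, 4, 1, 2$. Now I would examine the four faces $F_0, F_1, F_2, F_3$ adjacent to $S$, with $F_i$ sharing the edge $a_i a_{i+1}$. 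Each $F_i$ has size $4$ or $6$, and the crucial claim is that this choice is uniform across the four faces. A mixed configuration --- say $F_0 = a_0 a_1 a_1' a_0'$ a square while $F_1 = a_1 a_2 a_2' p q a_1'$ is a hexagon --- can be ruled out by the same forced-color analysis: one computes $\chi(p) = 2$ and $\chi(q) = 3$, but then $q$ and $a_0'$ are both neighbors of $a_1'$ (hence at mutual distance two) and both carry color $3$, a contradiction.

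Two cases thus remain. If every $F_i$ is a square, then each $a_i'$ has three accounted-for neighbors $a_i, a_{i-1}', a_{i+1}'$, so the cycle $a_0'a_1'a_2'a_3'$ bounds an additional face with no further vertices, and $G$ is the cube $C_0$. If every $F_i$ is a hexagon, the same forced coloring determines the two new vertices in each hexagon, and a short check shows that the edge of $F_i$ opposite to $a_i a_{i+1}$ carries the same pair of colors as $a_i a_{i+1}$ itself. The face across that opposite edge therefore presents the same local picture as $S$, and the dichotomy repeats: either all four such faces are squares (which, closed up, yields $G = C_1$), or all four are hexagons, allowing another round of propagation.

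Iterating, I peel off concentric layers of hexagons around $S$ until squares appear; since $G$ has only six squares in total, the process must terminate. The terminating configuration of six squares sits at the six face positions of a cube, with eight triangular regions of hexagons filling the eight vertex positions --- this is precisely the construction of $C_k$. The main obstacle will be the global closing-up step: verifying that the forced propagations originating from different squares agree on their overlaps, that the common hexagon-layer count $k$ is the same along every cube edge, and that no extra vertices or faces intrude. This should follow from the rigidity of the forced coloring (no degrees of freedom after $\chi$ is fixed on $S$) together with the uniqueness of the coloring of each $C_k$ established in Theorem \ref{cd} and the planarity of $G$. Assembling these pieces identifies $G$ with $C_k$, completing the proof.
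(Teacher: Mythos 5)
Your local analysis is sound and matches the spirit of the paper's argument: a square face carries all four colors, the colors of the external neighbors $a_i'$ are forced, and your computation ruling out a square $F_i$ next to a hexagon $F_{i+1}$ around the same square (the clash between $q$ and $a_0'$ at distance two through $a_1'$) is correct. The $C_0$ case and the ``opposite edge of a hexagon carries the same color pair'' observation are also fine. (Both you and the paper silently read the statement as being about Goodey graphs rather than all bipartite cubic planar graphs, which is clearly the intended reading, since otherwise it would contradict the NP-completeness theorem.)

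However, there is a genuine gap exactly where you flag ``the main obstacle,'' and it is not a routine closing-up step. First, your claim that ``the dichotomy repeats'' at the second level and beyond is unsupported: the level-one uniformity argument works because consecutive faces $F_i$, $F_{i+1}$ share the edge $a_{i+1}a_{i+1}'$, but the four faces across the opposite edges of the hexagons $F_i$ do \emph{not} share edges with one another (in $C_k$ for $k\geq 2$ they are separated by corner hexagons that your ``concentric layers'' picture does not account for), so showing the four chains emanating from $S$ have equal length requires a different argument. Second, even granting equal chain lengths, the assertion that the six squares must sit at the face positions of a cube with eight triangular hexagon regions is precisely what needs proof; ``rigidity plus planarity'' is not an argument. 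The paper fills this hole structurally: it forms the six-vertex graph $G'$ whose vertices are the squares and whose edges are the hexagon chains, uses the forced coloring to show chains cannot cross, return to their starting square, or run in parallel between the same pair of squares, and then applies Euler's formula ($6$ vertices, $12$ edges, $8$ faces, no faces of size one or two) to conclude $G'$ is the octahedron; the equal-chain-length claim is then established for the three sides of each triangular face of the octahedron, identifying $G$ with $C_k$. Some version of this global counting argument (or an equivalent one) is needed to complete your proof; as written, the proposal establishes the local structure around one square but does not determine $G$.
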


\begin{proof}
Consider a Goodey graph $G$ with a fixed distance-two four-coloring. Recall
that Goodey graphs have exactly six squares. Each of the squares is joined 
by four chains of hexagons to four squares. We consider the dual six-vertex 
graph $G'$ whose vertices are squares in $G$, with $ab$ an edge in $G'$ if
and only if there is a chain of hexagons joining squares $a$ and $b$. It can 
be readily verified that such a chain cannot cross itself or another chain in $G$.
Indeed, the colors in the fixed distance-two four-coloring are uniquely forced
along such chains and they don't match if the chains should cross. It follows
that the graph $G'$ is planar. A similar argument shows that a chain cannot
return to the same square, and two chains from the square $a$ cannot end
at the same square $b$. Thus $G'$ has no faces of size one or two, and by
Euler's formula it has $12$ edges and $8$ faces; therefore all faces of $G'$
must be triangles, and $G'$ is the octahedron.

Let $T$ be a triangular face in $G'$, let $s$ be a side of $T$ with the smallest 
number $d$ of hexagons in $G$. Then it can again be checked using the coloring
that the other two sides of $T$ will also have $d$ hexagons in $G$. Then $T$ 
corresponds to a triangular region $R$ as in Theorem~\ref{cd}, and the 
octahedron $G'$ yields $G=C_k$ for $k=d$.
\end{proof}

We can therefore conclude the following.

\begin{cor}
The distance-two four-coloring problem for Goodey graphs is solvable
in polynomial time.
\end{cor}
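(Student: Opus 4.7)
The plan is to invoke Theorems~\ref{cd} and~\ref{veta}, which together say that a Goodey graph $G$ admits a distance-two four-coloring if and only if $G\cong C_k$ for some $k\geq 0$, in which case the coloring is unique up to permutation of the four colors. Thus it suffices to exhibit a polynomial-time procedure which, given a Goodey graph, either produces a valid coloring or reports that none exists.

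The procedure I would use is forced propagation from a seed face. By Euler's formula a Goodey graph has exactly six quadrilateral faces, and a plane embedding can be computed in linear time, so one such square $S$ can be located and its four vertices assigned the colors $1,2,3,4$ in cyclic order. By the uniqueness statement in Theorem~\ref{cd}, this choice loses no generality: every distance-two four-coloring of every $C_k$ restricts to some cyclic permutation of $\{1,2,3,4\}$ on $S$. I then propagate exactly as in the proof of Theorem~\ref{cd}: whenever three of the four colors have been assigned among the distance-two neighbors of some uncolored vertex $v$, the color of $v$ is forced to be the remaining one. I maintain a queue of vertices whose color becomes forced and process them one by one. If the propagation ever tries to assign a vertex a color that conflicts with an existing assignment, or if two neighbors at distance at most two receive the same color, I stop and answer ``no''; if propagation terminates with all vertices colored, a single $O(n)$ verification pass confirms a valid distance-two four-coloring and I answer ``yes''.

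Correctness is immediate from the preceding theorems: if a distance-two four-coloring exists, then $G=C_k$ by Theorem~\ref{veta}, and by Theorem~\ref{cd} the forced assignment on $S$ extends uniquely to the coloring described there, so propagation succeeds without conflict; conversely, any output of the algorithm is, by the verification pass, a valid distance-two four-coloring. Each vertex is placed on the queue at most once and each edge is examined a constant number of times, so the total running time is $O(n)$.

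The main point is that there is in fact no real obstacle to overcome at this stage: all of the substantive structural work has already been done in Theorems~\ref{cd} and~\ref{veta}, and the corollary reduces to the observation that the forced extension from a single square face can be carried out efficiently, with correctness certified by the uniqueness and characterization results cited above.
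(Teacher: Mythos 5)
Your argument is correct and matches the paper's: the corollary is an immediate consequence of Theorems~\ref{cd} and~\ref{veta}, and the paper likewise only remarks that recognizing whether a Goodey graph is some $C_k$ (and then colouring it) can be done in polynomial time. Your forced-propagation algorithm is a reasonable way to make that remark explicit; the one small omission is the case where propagation stalls with vertices still uncoloured, in which you should also answer ``no'' --- this is sound because the forcing argument in Theorem~\ref{cd} guarantees propagation never stalls on a genuine $C_k$.
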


Recognizing whether an input Goodey graph is some $C_k$ can be
achieved in polynomial time; in the same time bound $G$ can actually
be distance-two four-colored.

\section{Distance-two four-coloring of type-two Barnette graphs is polynomial}

We now return to general type-two Barnette graphs, i.e., cubic plane graphs
with face sizes $3, 4, 5$, or $6$. As a first step, we analyze when a general
cubic plane graph admits a distance-two four-coloring which has three colors
on the vertices of every face of $G$.

\begin{theorem}\label{th}
A cubic plane graph $G$ has a distance-two four-coloring with three colors per 
face if and only if 

\begin{enumerate}
\item
all faces in $G$ have size which is a multiple of $3$,
\item
$G$ is bi-connected, and
\item
if two faces share more than one edge, the relative positions of the shared edges 
must be congruent modulo $3$ in the two faces.
\end{enumerate}
\end{theorem}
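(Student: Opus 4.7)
My plan is to prove the two directions separately. The necessity is the easier direction and rests on the rigid period-3 coloring pattern that the three-colors-per-face restriction forces on each face. The sufficiency proceeds by a face-by-face propagation along a spanning tree of the dual graph $G^*$, with conditions (1)--(3) ensuring global consistency.

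For necessity, I would first establish (2). In a cubic graph, failure to be bi-connected is equivalent to the existence of a bridge $vw$. The unique face $f$ containing such a bridge visits $vw$ twice on its boundary walk, so $f$'s boundary includes $v$, $w$, and both of $v$'s other neighbors $a, d$; the distance-two 4-coloring forces $\phi(v), \phi(a), \phi(w), \phi(d)$ to be four distinct colors, contradicting the three-colors-per-face hypothesis. With (2) in hand each face is a simple cycle. For (1), the distance-two condition on a face $f = v_0 v_1 \cdots v_{n-1}$ gives $\phi(v_i), \phi(v_{i+1}), \phi(v_{i+2})$ pairwise distinct, so these three colors exhaust the three used on $f$; hence $\phi(v_{i+3}) = \phi(v_i)$, forcing period-3 repetition and $3 \mid n$. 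For (3), on a shared edge $e$ the ordered pair of endpoint colors fixes the modular position of $e$ within each face's period-3 pattern, so the relative modular positions of two shared edges between the same pair of faces must agree up to orientation, which is exactly condition (3).

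For sufficiency, I take (1), (2), (3) as given and construct a coloring. Pick a root face $f_0$ and color its boundary with an arbitrary period-3 cyclic pattern using three of the four colors. Then process the remaining faces along a spanning tree of $G^*$: at each face $f$ reached from its parent across a shared edge $uv$, the colors $\phi(u), \phi(v)$ must occupy two consecutive positions of the period-3 pattern on $f$. Since $|f|$ is divisible by 3 by condition (1), the pattern extends unambiguously to the rest of $f$ once the ``third'' color is committed, and this third color is forced at any cubic vertex of $f$ where another already-colored face meets $f$, by the requirement that three distinct ``missing colors'' occur there.

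The main obstacle is to verify that this construction closes consistently on the non-tree edges of $G^*$. Since $G^*$ is a plane graph its cycle space is generated by its facial cycles, so the closure reduces to two kinds of local checks: (i) around each triangular face of $G^*$ corresponding to a cubic vertex of $G$, where consistency is a short algebraic identity modulo 3 among the propagation offsets; and (ii) around each 2-cycle of $G^*$ formed by a pair of shared edges between the same two faces, where consistency is precisely condition (3). With consistency in place, adjacent vertices of $G$ differ because they lie at consecutive positions of some face, and two vertices at graph-distance two share a face (as $G$ is cubic and bi-connected) where they differ by two modulo three in the period-3 pattern, confirming that $\phi$ is a distance-two 4-coloring with only three colors per face.
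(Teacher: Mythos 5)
Your proof is correct and follows essentially the same route as the paper's: necessity from the forced period-3 pattern around each face (your bridge argument via four forced colors at a bridge endpoint is a slightly different but equally valid local contradiction), and sufficiency by propagating the coloring through the dual and reducing global consistency to the facial cycles of the dual --- the triangles at cubic vertices and the 2-cycles arising from faces sharing two edges, which is exactly where condition (3) enters. The paper phrases this last reduction as an induction on the number of dual faces inside a cycle rather than invoking the cycle space, but the content is the same.
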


The last condition means the following: if faces $F_1, F_2$ meet in edges $e, e'$
and there are $n_1$ edges between $e$ and $e'$ in (some traversal of) $F_1$, 
and $n_2$ edges between $e$ and $e'$ in (some traversal of) $F_2$, then 
$n_1 \equiv n_2 (\mod 3)$.

\begin{proof}
Suppose $G$ has a distance-two four-coloring with three colors in each face.
The unique way to distance-two color a cycle with colors $1, 2, 3$ is by repeating
them in some order $(123)^*$ along one of the two traversals of the cycle. Therefore
the length is a multiple of $3$ so (1) holds. Moreover, there can be no bridge in $G$
as that would imply a face that self-intersects and is traversed in opposite directions
along any traversal of that face, disagreeing with the order $(123)^*$ in one of them;
thus (2) also holds. Finally, (3) holds because the common edges must have the same
colors in both faces.

Suppose the conditions hold, and consider the dual $G^D$ of $G$. (Note that each face
of $G^D$ is a triangle.) We find a distance-two coloring of $G$ as follows. Let $F$ be a
face in $G$; according to conditions (1-2), its vertices can be distance-two colored with
three colors. That takes care of the vertex $F$ in $G^D$. Using condition (3), we can
extend the coloring of $G$ to any face $F'$ adjacent to $F$ in $G^D$. Note that we
can use the fourth colour, $4$, on the two vertices adjacent in $F'$ to the two vertices
of a common edge. In this way, we can
propagate the distance-two coloring of $G$ along the adjacencies in $G^D$. If this
produces a distance-two coloring of all vertices of $G$, we are done. Thus it remains
to show there is no inconsistency in the propagation. If there is an inconsistency, it will
appear along a cycle $C$ in $G^D$. If there is only one face inside of $C$, then $C$ 
is a triangle corresponding to a vertex of $G$, and there is no inconsistency. Otherwise 
we can join some two vertices of $C$ by a path $P$ inside $C$, and the two sides of 
$P$ inside $C$ give two regions that are inside two cycles $C',C''$. The consistency of 
$C$ then follows from the consistency of each of $C',C''$ by induction on the number of 
faces inside the cycle.
\end{proof}

It turns out that conditions (1 - 3) are automatically satisfied for cubic plane graphs 
with faces of sizes $3$ or $6$.

\begin{cor}
Type-two Barnette graphs with faces of sizes $3$ or $6$ are distance-two four-colorable.
\end{cor}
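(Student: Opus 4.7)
Plan: I would deduce the corollary from Theorem~\ref{th} by verifying its three conditions for any cubic plane graph $G$ whose faces all have size $3$ or $6$. Condition (1) is immediate. For condition (2), I would argue that no bridge can exist in such a graph: a bridge $uv$ in a simple cubic graph creates a face whose boundary traverses $uv$ twice, together with closed walks of length at least $3$ at each endpoint, giving a face of size at least $8$, contrary to hypothesis. Since every cut vertex in a cubic graph is incident to a bridge, $G$ is therefore bi-connected.

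The main work is condition (3). Shared edges of two faces are necessarily non-adjacent (no two faces can occupy the same vertex-corner), and any two edges of a triangle meet at a vertex, so a triangle shares at most one edge with another face; thus any two faces $F_1, F_2$ sharing at least two edges $e, e'$ must both be hexagons. Letting $a_1 \subset F_1$ and $b_1 \subset F_2$ be the matched arcs between $e$ and $e'$, of lengths $n_1, n_1'$, the cycle $C_1 := a_1 \cdot b_1^{-1}$ bounds the inner side of the annulus $F_1 \cup F_2$. Each intermediate vertex of $C_1$ has a single ``half-edge'' pointing into the inner disk (since $F_1, F_2$ have no interior vertices), giving $E_\text{half} = |C_1|-2$ and $3V_\text{in} = 2E_\text{in} + E_\text{half}$ from cubicity. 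Requiring all inner faces to have sizes divisible by $3$ gives $2E_\text{in} + 2E_\text{half} + |C_1| \equiv 0 \pmod 3$; combining the two congruences yields $|C_1|\equiv 1\pmod 3$. Since $3 \leq |C_1| \leq 6$, only $|C_1|=4$ works, so $(n_1, n_1') \in \{(1,3),(2,2),(3,1)\}$.

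The hardest step, and the principal obstacle, is to eliminate the asymmetric cases. In the $(1, 3)$ configuration $a_1$ is a single edge $v_1 v_2$, and the inner face of $G$ bordering $v_1 v_2$ on the $C_1$-side is forced to traverse $v_1\to v_2\to y_2\to\cdots\to y_1\to v_1$ through the two intermediates $y_1, y_2$ of the length-$3$ arc $b_1$. It cannot be a triangle (no common interior neighbor of $v_1, v_2$), and being a hexagon would require a length-$3$ interior path $y_2\to\hat y_2\to\hat y_1\to y_1$, forcing the unique interior neighbors $\hat y_1, \hat y_2$ of $y_1, y_2$ to be adjacent; a case analysis of the forced interior embedding (starting from the minimal case $V_\text{in} = 4$, in which the internal graph is uniquely $K_4$ minus $\hat y_1\hat y_2$, and working up through $V_\text{in}\equiv 0\pmod 2$) shows this adjacency cannot be realized without creating some face of size outside $\{3,6\}$. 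Consequently $(n_1, n_1') = (2, 2)$, so $n_1 \equiv n_1' \pmod 3$; condition (3) holds, and the corollary follows from Theorem~\ref{th}.
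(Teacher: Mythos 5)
Your overall route is the same as the paper's: verify conditions (1)--(3) of Theorem~\ref{th}. Conditions (1) and (2) are handled correctly (your observation that a face containing a bridge in a simple cubic graph has boundary length at least $2+3+3=8$ is fine), and your Euler-type count inside the disk bounded by the two matched arcs, forcing $|C_1|\equiv 1 \pmod 3$ and hence $(n_1,n_1')\in\{(1,3),(2,2),(3,1)\}$, is a genuine refinement over the paper, which simply asserts that two hexagons sharing two edges must sit in positions congruent mod~$3$ and that this is ``easy to check.'' (One small wrinkle: your identity $E_{\text{half}}=|C_1|-2$ implicitly assumes no chords of $C_1$; if an interior edge has both ends on $C_1$ the bookkeeping changes, although the congruence $|C_1|\equiv 1\pmod 3$ survives if you separate chords from true half-edges.)

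The genuine gap is the elimination of the $(1,3)$ configuration, which you yourself flag as the principal obstacle and then dispose of by announcing that ``a case analysis of the forced interior embedding \dots\ shows this adjacency cannot be realized.'' Nothing is actually verified there, and the proposed organization (``start from the minimal case $V_{\text{in}}=4$ and work up'') is not the right shape for the argument: what actually happens is an infinite descent. Concretely, in the $(1,3)$ case the disk is bounded by a $4$-cycle $v_1v_2y_2y_1$ in which $v_1,v_2$ have all three edges already used and $y_1,y_2$ each send one edge $y_i\hat y_i$ inward. The interior face bordering $v_1v_2$ must contain the path $\hat y_1 y_1 v_1 v_2 y_2 \hat y_2$; it cannot be a triangle or (being a pentagon) have $\hat y_1=\hat y_2$, so it is a hexagon with $\hat y_1\hat y_2$ an edge. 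But then $y_1y_2\hat y_2\hat y_1$ is again a $4$-cycle of exactly the same type (two adjacent saturated vertices, two adjacent vertices each with one inward edge) bounding a strictly smaller disk, and the configuration reproduces itself forever --- a contradiction with finiteness. Without this (or an equivalent) argument your proof of condition (3) is incomplete; to be fair, the paper's own proof of this corollary is equally laconic at exactly this point.
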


\begin{proof}
Such a graph must be bi-connected, i.e., cannot have a bridge, since no triangle or
hexagon can self-intersect. Moreover, only two hexagons can have two common edges, 
and it is easy to check that they must indeed be in relative positions congruent modulo 
$3$ on the two faces. (Since all vertices must have degree three.) Thus the result follows 
from Theorem \ref{th}.
\end{proof}

\begin{theorem}
Let $G$ be type-two Barnette graph.
Then $G$ is distance-two four-colorable if and only if it is one of the graphs 
$C_k, k\geq 0$, or all faces of $G$ have sizes $3$ or $6$.
\end{theorem}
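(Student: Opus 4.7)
The \emph{if} direction is immediate: if $G = C_k$ for some $k\geq 0$, Theorem~\ref{cd} supplies a distance-two four-coloring, while if all face sizes lie in $\{3,6\}$ the preceding corollary does the job.

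For the \emph{only if} direction, suppose $G$ admits a distance-two four-coloring. I first exclude pentagonal faces: the five vertices of a pentagon are pairwise at graph-distance at most two along the bounding cycle, so they would require five distinct colors, impossible with four. Hence all face sizes lie in $\{3,4,6\}$. Euler's formula together with $\sum_F |F| = 2E = 3V$ then yields $3f_3 + 2f_4 = 12$, whose nonnegative integer solutions are $(f_3,f_4) \in \{(0,6),(2,3),(4,0)\}$. In the case $(0,6)$ the graph $G$ is a Goodey graph and Theorem~\ref{gd} forces $G=C_k$; in the case $(4,0)$ all face sizes are $3$ or $6$, giving the second alternative.

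It remains to rule out $(f_3,f_4)=(2,3)$. The key local obstruction is that a triangular face cannot share an edge with a square face: in a triangle $T=abc$ colored $1,2,3$, the closed-neighborhood constraints at $a,b,c$ force the three external neighbors $a',b',c'$ all to receive the fourth color; thus on the other side of edge $ab$ the two consecutive vertices $a',b'$ already share a color, which a square (whose four vertices must be four distinct colors) cannot accommodate. So in the $(2,3)$ case, every triangle is separated from every square by at least one hexagonal ``corridor.'' To promote this local obstruction to a global contradiction, I would combine (i) the fact that each color class is a $2$-packing of size exactly $V/4$ in a distance-two four-colored cubic graph, forcing $V\equiv 0\pmod 4$ and hence $f_6$ odd; (ii) the observation that all three external neighbors of a triangle lie in a single color class; and (iii) that each of the three squares contains exactly one vertex of each color class.

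With these in hand, the strategy is to track the rigid coloring pattern propagated outward from each triangle through its three adjacent hexagons (in the spirit of the forced-propagation argument of Theorem~\ref{cd}) and compare it with the pattern propagated outward from each square. In the Goodey case this propagation was shown to be completely determined once a single square is colored; the same rigidity here should force either an extra $3$-face (pushing $f_3 > 2$ and contradicting the case assumption) or a color clash somewhere in the hexagonal corridor separating a triangle from a square. The main technical obstacle is executing this propagation uniformly for arbitrary (odd) values of $f_6$: the propagation must be shown to be rigid enough that only finitely many ``seed configurations'' near the triangles and squares need be inspected before the conflict appears. Once this is done, the $(2,3)$ case is eliminated and the theorem follows.
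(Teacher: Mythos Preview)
Your \emph{if} direction, the pentagon exclusion, the Euler count giving $(f_3,f_4)\in\{(0,6),(2,3),(4,0)\}$, and the handling of the $(0,6)$ and $(4,0)$ cases are all correct and match the paper.

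The genuine gap is the $(2,3)$ case. What you present there is not a proof but a plan: you correctly establish the local obstruction that a triangle and a square cannot share an edge, you list three auxiliary observations, and then you describe a propagation ``strategy'' while explicitly conceding that ``the main technical obstacle is executing this propagation uniformly for arbitrary (odd) values of $f_6$.'' None of the propagation is actually carried out, and nothing in items (i)--(iii) by itself yields a contradiction; in particular, the parity remark $V\equiv 0\pmod 4$ and $f_6$ odd does not obstruct anything on its own.

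The paper disposes of the $(2,3)$ case in one line, and the missing idea is simply to reuse the structural consequence already extracted in the proof of Theorem~\ref{gd}. There it was shown that, once a distance-two four-coloring is fixed, the coloring forces every square to sit at the end of four chains of hexagons, each leading to a \emph{distinct} other square (chains cannot self-return and two chains from the same square cannot meet at a common square). That argument does not depend on the absence of triangles; your own local obstruction---that the three outside neighbours of a triangle all receive the fourth colour---is exactly what prevents such a chain from terminating at a triangle rather than at a square. Hence the auxiliary ``square graph'' would have to be simple and $4$-regular on only three vertices, which is impossible. That is the whole argument; no case-by-case propagation over $f_6$ is needed.
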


\begin{proof}
If there are faces of size both $3$ and $4$ (and possibly size $6$), then there must 
be (by Euler's formula) two triangles and three squares, and as in the proof of 
Theorem~\ref{gd}, the squares must be joined by chains of hexagons, which is 
not possible with just three squares.

If there is a face of size $5$, then there is no distance-two four-coloring since all five 
vertices of that face would need different colors.
\end{proof}

\section{Distance-two coloring of quartic graphs}

A {\em quartic graph} is a regular graph with all vertices of degree four. Thus any
distance-two coloring of a quartic graph requires at least five colors.
A {\em four-graph} is a plane quartic graph whose faces have sizes $3$ or $4$.
The argument to view these as analogues of type-two Barnette graphs is as
follows. For cubic plane Euler's formula limits the numbers of faces that are
triangles, squares, and pentagons, but does not limit the number of hexagon
faces. Similarly, for plane quartic graphs, Euler's formula implies that such a 
graph must have $8$ triangle faces, but places no limits on the number of 
square faces.

We say that two faces are {\em adjacent} if they share an edge.

\begin{lemma}
\label{four}
If a four-graph can be distance-two five-colored, then every square face
must be adjacent to a triangle face. Thus $G$ can have at most $24$ square 
faces.
\end{lemma}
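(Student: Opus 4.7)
The plan is to argue by contradiction. Suppose $F = abcd$ is a square face all four of whose adjacent faces $F_1, F_2, F_3, F_4$ are squares. By permuting colors, I may assume the corners of $F$ receive colors $1, 2, 3, 4$ in cyclic order, so color $5$ is missing from $F$. At each corner $v$ of $F$, the two neighbors of $v$ on $F$ use two of the four colors different from $v$, so the two outside neighbors of $v$ carry exactly the remaining two colors, which a direct check shows are always $\{d_v, 5\}$, where $d_v$ is the color of the corner diagonally opposite $v$. In particular, each corner has exactly one color-$5$ outside neighbor. No two of these color-$5$ vertices can sit on the same adjacent square $F_i$ (they would be at distance $\le 2$ with identical color), so the assignment of corners to the adjacent face containing their color-$5$ neighbor is a bijection. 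A short cyclic consistency check reduces this to exactly two options, ``clockwise'' and ``counter-clockwise''; WLOG fix the clockwise one, which then determines the colors of all eight outside vertices of $F$.

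Next, I would force every ``diagonal face'' $F_v$ (the face at a corner $v$ that shares no edge with $F$) and every ``outer face'' $H_i$ (the face across the edge of $F_i$ opposite to the edge shared with $F$) to be a square. If $F_v$ were a triangle, then $v$'s two outside neighbors would be adjacent; checking $v = a$, the color-$3$ outside neighbor of $a$ would then have two neighbors of color $5$, namely the other outside neighbor of $a$ and an already-forced color-$5$ vertex from the next adjacent face, violating the distance-$2$ condition. The only way to escape this is for those two color-$5$ vertices to coincide, but a quick check shows such a coincidence would force a vertex with two neighbors of color $4$, another contradiction. Symmetric arguments handle $F_b, F_c, F_d$. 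For $H_i$, the four vertices forced onto its boundary by the above coloring have pairwise distinct colors, so $|H_i| \ge 4$, meaning $H_i$ must be a square.

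Now each $F_i$ has all four of its face-neighbors ($F$, two corner-diagonal faces, and $H_i$) forced to be squares, so $F_i$ itself has no triangle neighbor. Applying the same argument recursively to each $F_i$ pushes the ``no triangle neighbor'' property outward. Since $G$ is connected (otherwise apply to the component of $F$), its face-adjacency graph is connected, so this propagation eventually forces every face of $G$ to be a square. But Euler's formula applied to a connected quartic plane graph all of whose faces have size $3$ or $4$ yields exactly $F_3 = 8$ triangular faces, a contradiction. Hence $F$ must be adjacent to at least one triangle. The bound $F_4 \le 24$ now follows at once: each of the $8$ triangular faces borders at most $3$ other faces, giving at most $3 \cdot 8 = 24$ triangle--square face adjacencies in total, and every one of the $F_4$ square faces contributes at least one such adjacency.

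The main obstacle is the case analysis forcing the diagonal and outer faces to be squares: carrying this out rigorously requires ruling out a handful of ``vertex collapse'' possibilities in which two named outside vertices might coincide. Each such collapse is readily shown to produce a vertex with two equal-colored neighbors, but the bookkeeping needs care.
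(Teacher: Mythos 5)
Your proposal is correct and follows essentially the same route as the paper's proof: assume a square face with no adjacent triangle, observe that the color-$5$ outside neighbors are forced to rotate consistently around it, force the corner and outer faces to be squares (your ``four forced distinct colors on the boundary of $H_i$'' computation is the same forcing the paper phrases as a hypothetical triangle vertex seeing all five colors within distance two), propagate so that all faces become squares, and contradict the eight triangles guaranteed by Euler's formula, with the bound $24$ following from counting triangle--square adjacencies. Your extra attention to possible coincidences of the named outside vertices is a point the paper glosses over, but it does not change the argument.
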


\begin{proof}
We view the numbers $1, 2, 3, 4$ modulo $4$, and number $5$ is separate.
Let $u_1 u_2 u_3 u_4$ be a square face that has no adjacent triangle face.
(This is depicted in Figure \ref{fsquares} as the square in the middle.)
Color $u_i$ by $i$. Let the adjacent square faces be $u_i u_{i+1} w_{i+1} v_i$.
One of $v_i,w_i$ must be colored $5$ and the other one $i+2$. Then either
all $v_i$ or all $w_i$ are colored $5$, say all $w_i$ are colored $5$, and all
$v_i$ are colored $i+1$. Then $v_i u_i w_i$ cannot be a triangle face, or
$w_i,w_{i+1}$ would be both colored $5$ at distance two. Therefore
$t_i v_i u_i w_i$ must be a square face. (In the figure, this is indicated
by the corner vertices being marked by smaller circles; these must exist
to avoid a triangle face.) This means that the original square is surrounded
by eight square faces for $u_1u_2u_3u_4$, and $t_i$ must have color $i+3$, 
since $u_i,v_{i+3},v_i,w_i$ have colors $i,i+1,i+2,5$.

But then there cannot be a triangle face $x_iv_iw_{i+1}$, since $x_i$ is within
distance two of $u_i,u_{i+1},$ $v_i,t_i,w_{i+1}$ of colors $i,i+1,i+2,i+3,5$,
so each of the adjacent square faces $u_i u_{i+1} w_{i+1} v_i$ for
$u_1u_2u_3u_4$ has adjacent square faces as well. This process of moving to
adjacent square faces eventually reaches all faces as square faces, contrary
to the fact that there are $8$ triangle faces. 
\end{proof}

\begin{figure}[h!]
\begin{center}
\includegraphics[height=4cm]{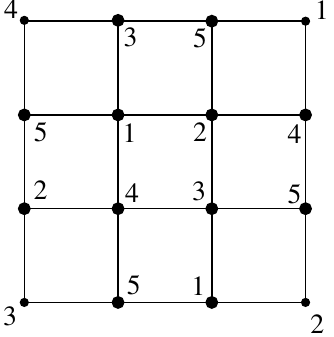}
\end{center}
\caption{One square without adjacent triangles implies all faces must be squares}
\label{fsquares}
\end{figure}

It follows that there are only finitely many distance-two five-colorable four-graphs.

\begin{cor}
The distance-two five-coloring problem for four-graphs is polynomial.
\end{cor}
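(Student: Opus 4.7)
The plan is to combine Lemma \ref{four} with Euler's formula to obtain an absolute vertex bound on any distance-two five-colorable four-graph, and then solve the problem by finite enumeration.

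First I would quantify the ``only finitely many'' remark preceding the corollary. As noted at the start of this section, every four-graph has exactly $8$ triangular faces by Euler's formula, and Lemma \ref{four} supplies at most $24$ square faces on any distance-two five-colorable instance. Double-counting edge-face incidences then yields $2|E(G)| = 3 \cdot 8 + 4s \leq 24 + 96 = 120$, so $|E(G)| \leq 60$, and quartic regularity forces $|V(G)| \leq 30$.

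The algorithm on input $G$ is now immediate: first test whether $|V(G)| \leq 30$; if not, return \emph{no}, justified by the contrapositive of Lemma \ref{four}. Otherwise, exhaustively enumerate all at most $5^{30}$ assignments from $[5]$ to $V(G)$, checking each for the distance-two property via a constant-time routine. This is trivially polynomial (indeed linear) in the input size, and it also produces an explicit coloring whenever one exists.

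There is no substantive obstacle: the entire content lies in Lemma \ref{four}, and the corollary is a formal consequence. The only point to flag carefully is that the \emph{no} output in the large case is valid precisely by the contrapositive of the lemma, together with the elementary Euler calculation above converting the face bound into a vertex bound.
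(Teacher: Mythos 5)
Your argument is correct and follows the paper's intended route exactly: the corollary is a formal consequence of Lemma~\ref{four} plus the Euler count of $8$ triangular faces, which bounds the size of any colorable instance (the paper merely remarks that ``there are only finitely many'' such graphs and leaves the arithmetic and the trivial enumeration implicit). Your explicit bound $|V(G)|\leq 30$ and the brute-force check are a fine way to spell this out.
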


In fact, we can fully describe all four-graphs that are distance-two five-colorable. 
Consider the four-graphs $G_0, G_1$ given in Figure \ref{fgzero}. The graph 
$G_0$ has 8 triangle faces and 4 square faces, the graph $G_1$ has 8 triangle 
faces and 24 square faces. Note that $G_0$ is obtained from the cube by inserting 
two vertices of degree four in two opposite square faces. Similarly, $G_1$ is obtained 
from the cube by replacing each vertex with a triangle and inserting into each face of 
the cube a suitably connected degree four vertex. (In both figures, these inserted 
vertices are indicated by smaller size circles.)

\begin{figure}[h!]
\includegraphics[height=4.5cm]{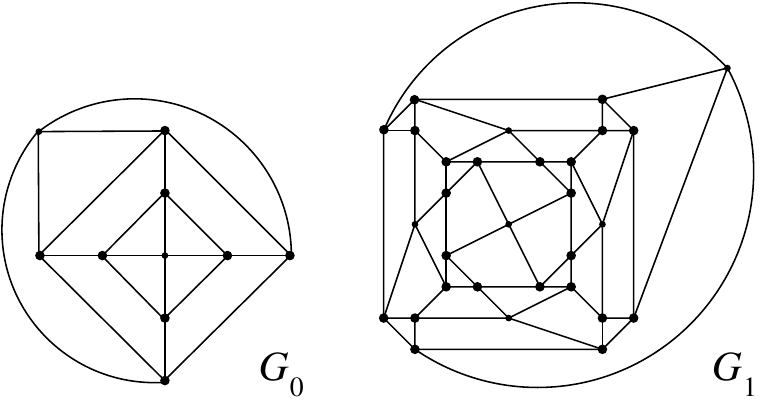}
\caption{The only four-graphs that admit a distance-two five-coloring}
\label{fgzero}
\end{figure}

\begin{theorem}\label{prsi}
The only four-graphs $G$ that can be distance-two five-colored are $G_0, G_1$.
These two graphs can be so colored uniquely up to permutation of colors.
\end{theorem}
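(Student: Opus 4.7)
The plan is to fix a triangle face of $G$ (which exists since Euler's formula forces exactly $8$ triangle faces in any four-graph, as noted in the discussion preceding Lemma \ref{four}) and to propagate the distance-two five-coloring outward, showing that at every step both the coloring and the local face structure are forced. Let $T = t_1 t_2 t_3$ be such a triangle, and up to permutation of colors set $t_i$ to color $i$. Each $t_i$ has degree four, with two neighbors on $T$ and two further neighbors $a_i, b_i$ lying outside $T$; the five vertices $\{t_i, t_{i-1}, t_{i+1}, a_i, b_i\}$ must exhaust all five colors, so $\{a_i, b_i\}$ receives the pair $\{4,5\}$.

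The main case distinction is whether some face $F$ adjacent to $T$ across an edge $t_i t_{i+1}$ is itself a triangle. If yes, then $F = t_i t_{i+1} c$ for a common outer neighbor $c$ of $t_i$ and $t_{i+1}$, and the two remaining neighbors of $c$ are forced by the distance-two constraints at $c$. Tracking this forced coloring one step further, I would show that $T$ and $F$ sit inside a ``fan'' of four triangles meeting at a common apex vertex; iterating the argument on the other triangles of the fan shows that the $8$ triangle faces of $G$ must partition into two such fans at antipodal apices, with $4$ surviving square faces between them, which is exactly the structure of $G_0$. If instead all three faces adjacent to $T$ are squares, then the square-chasing argument used in the proof of Lemma \ref{four} applies: each of these squares has its own three further adjacent faces essentially determined by the forced coloring, and propagating outward one must eventually meet triangles (there are only $8$ available). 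Matching the $8$ forced triangles against the at most $24$ squares permitted by Lemma \ref{four} pins the structure down to exactly $G_1$.

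Uniqueness of the coloring in each case is then immediate: once $T$ is colored, every newly encountered vertex lies adjacent or at distance two from a previously colored vertex, and so its color is forced. The main obstacle is confirming that the local propagation globalizes without contradiction or branching, i.e., that the partial configurations grown out of $T$ must close up as one of $G_0, G_1$ and not as some exotic variant. I expect to handle this by passing to the dual, viewing the $8$ triangle faces as vertices whose adjacency pattern in the dual is forced, through the color constraints along the intervening square chains, to be an octahedron-like skeleton, in the spirit of the argument used in Theorem \ref{gd} for Goodey graphs; this octahedral backbone, together with the forced coloring along the connecting squares, pins $G$ down to exactly one of $G_0$ or $G_1$.
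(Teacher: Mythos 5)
Your overall strategy matches the paper's: fix a colored triangle, split on whether some triangle face is adjacent to another triangle (leading to $G_0$ via the ``wheel/fan'' of four triangles around a common apex) or every triangle is surrounded by squares (leading to $G_1$), and use the forcing of colors to get uniqueness. The $G_0$ branch is sketched essentially as in the paper: two adjacent triangles force (by counting the colors available to the vertices adjacent to a triangle) a hub vertex adjacent to a $4$-cycle of triangles, and the coloring then propagates to a second antipodal hub with four surviving squares.

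The genuine gap is in the all-squares branch. Saying that ``matching the $8$ forced triangles against the at most $24$ squares permitted by Lemma \ref{four} pins the structure down to exactly $G_1$'' is not an argument; a count of $8$ triangles and $\le 24$ squares is compatible a priori with many incidence patterns. What the paper actually has to do there is: (i) show each triangle $T'=u_1u_2u_3$ is surrounded by six squares $u_iu_{i+1}w_{i+1}v_i$ and $u_iv_it_iw_i$; (ii) invoke Lemma \ref{four} to attach to each outer square $u_iv_it_iw_i$ a triangle $T_i$, which can sit at $v_it_ix_i$ or at $w_it_iy_i$ but not both; (iii) show the resulting linkage graph on the $8$ triangles is a triangle-free cubic planar graph on $8$ vertices, hence the \emph{cube} --- not the ``octahedron-like skeleton'' you propose by analogy with Theorem \ref{gd}, where it is the six \emph{squares} of a Goodey graph that form an octahedron; and (iv), the crux, carry out a chirality analysis of how the three triangles $T_1,T_2,T_3$ attach (all on the $v$-side versus mixed), ruling out the mixed configurations by further forced-coloring contradictions. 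Step (iv) is exactly the ``exotic variant'' problem you flag as the main obstacle, and your proposal defers it rather than resolving it; without it you have not excluded four-graphs whose eight triangles link up as a cube but with inconsistent attachment orientations. So the proposal is the right skeleton but is incomplete where the real work lies.
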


\begin{proof}
We show that if $G$ can be so colored, then either $G$ is $G_0$ or every triangle 
in $G$ must be surrounded by six square faces, in which case $G$ is $G_1$.

Suppose $G$ has two adjacent triangles $u_5 u_1 u_2$ and $T=u_5 u_2 u_3$.
The vertices adjacent to $T$ must be given two colors
other than those of $u_5, u_2, u_3$. If $T$ has two adjacent
squares, then it has five adjacent vertices, which must be given the two 
colors in alternation, a contradiction.
Similarly if $T$ is adjacent to three triangles then the three vertices adjacent
would need three new colors, a contradiction.

We may thus assume a triangle $u_5 u_3 u_4$. If there is a square $u_1 u_5 u_4 t$,
this square plus the two adjacent triangles would need six colors, a contradiction,
so $u_5 u_4 u_1$ is a triangle, completing $u_5$ adjacent to the four-cycle
$u_1 u_2 u_3 u_4$. Color $u_i$ with color $i$. Then the additional vertex
$v_i$ adjacent to $u_i$ for $1\leq i\leq 4$ must be given color $i+2$
(modulo 4), so these $v_i$ form a 4-cycle, and any additional vertex adjacent
to a $v_i$ must get color 5, so there is a single additional $v_5$ with color 5.
This gives a uniquely colored $G_0$, up to permutation of colors.

In the remaining case, each triangle $T'=u_1 u_2 u_3$ has adjacent squares
$u_i u_{i+1} w_{i+1} v_i$, with addition modulo $3$. The vertices $v_i, w_{i+1}$
must be given the two colors different from those of $T'$, and in alternation
around $T'$, so there cannot be a triangle $u_i v_i w_i$ else $w_i,w_{i+1}$
with the same color would be at distance two. So there are squares
$u_i v_i t_i w_i$, and $T'$ is surrounded by six squares.

By Lemma~\ref{four}, we must have a triangle adjacent to the square
$u_i v_i t_i w_i$, either $v_i t_i x_i$ or $w_i t_i y_i$, but not both
since six colors would be needed. Let such a triangle be $T_i$, and we link $T'$
to the three $T_i$. These triangles viewed as vertices linked form a cubic graph
without triangles $G'$, since a triangle face would be three triangles joined in
$G$, which would need to have only three squares inside by Lemma~\ref{four}.
The graph $G'$ has 8 vertices for the 8 triangles, so this graph is the
cube $C$. Replacing each vertex corresponding to a triangle by the 
corresponding triangle gives a graph $D$.

Suppose the triangles adjacent to $T'$ are $v_i t_i x_i$ for $1\leq i\leq 3$. 
Then going around a face of $C$ we notice only one vertex inside this face
by Lemma~\ref{four}, giving
the construction of $G_1$. If we assign to the vertex inside this face the
color $5$, we notice that the surrounding triangles in $D$ must use three
colors at most 4, and each must omit a different color of 4. This implies
that all vertices in centers of square faces must be 5, and only opposite
triangles for $C$ use the same 3 out of 4 colors. This proves existence
and uniqueness up to permutation of colors of the distance-two 5-coloring
of $G_1$.

Suppose instead the adjacent triangles are $T_1=w_1 t_1 y_1$, $T_2=v_2 t_2 x_2$,
and $T_3=v_3 t_3 x_3$. If there is no triangle $v_1 w_2 x$, then the three squares
$Q_1,Q_2,Q_3$ between $T_1$ and $T_2$ are respectively adjacent to squares
$Q'_1,Q'_2,Q'_3$, and $Q'_2$ must be adjacent to a triangle by Lemma~\ref{four}. 
There must be triangles at both ends of the $Q'_i$ and these are adjacent to
$T_1$ and $T_2$, a contradiction.

Finally, suppose again the adjacent triangles are $T_1=w_1 t_1 y_1$, 
$T_2=v_2 t_2 x_2$, and $T_3=v_3 t_3 x_3$, but there is a triangle $v_1 w_2 x$.
This triangle faces $T'$, and $T_1$ faces $T_3$. Triangles facing each other
give two diagonals in the square faces of $C$, which implies two opposite
faces without such diagonals in $C$, while the four sets of two diagonals
form a matching of the $8$ vertices of $C$. If the center of a face without
diagonals gets assigned $5$, then the adjacent triangles will be assigned
a subset of $1\leq i\leq 4$. Then joining the sets of two diagonals assigns
a $5$ to a vertex of each remaining triangle, which is not possible to the
center of the remaining face without diagonals.
\end{proof} 

\begin{figure}[h!]
\begin{center}
\includegraphics[height=4cm]{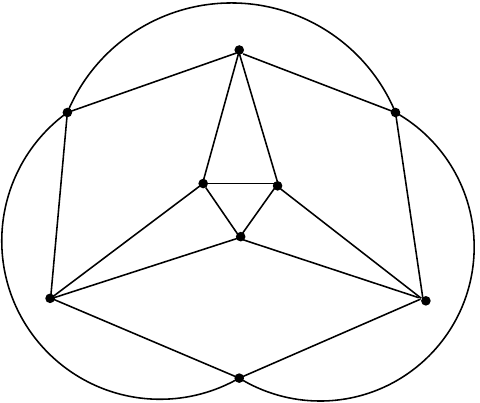}
\end{center}
\caption{A four-graph requiring nine colors in any distance-two coloring}
\label{weg}
\end{figure}

We close with a few remarks and open problems.

Wegner's conjecture \cite{weg} that any planar graph with maximum degree 
$d=3$ can be distance-two seven-colored has been proved in \cite{hjt,carsten}.
That bound is actually achieved by a type-two Barnette graph, namely the graph 
obtained from $K_4$ by subdividing three incident edges. Thus the bound of $7$ 
cannot be lowered even for type-two Barnette graphs.

Wegner's conjecture for $d=4$ claims that any planar graph with maximum 
degree four can be distance-two nine-colored. The four-graph in Figure \ref{weg} 
actually requires nine colors in any distance-two coloring. Thus if Wegner's 
conjecture for $d=4$ is true, the bound of $9$ cannot be lowered, even in the 
special case of four-graphs. It would be interesting to prove Wegner's conjecture 
for four-graphs, i.e., to prove that  {\em any four-graph can be distance-two 
nine-colored}.

Finally, we've conjectured that any bipartite cubic planar graph can be 
distance-two six-colored (a special case of a conjecture of Hartke, 
Jahanbekam and Thomas \cite{hjt}). The hexagonal prism (a cyclic
prism with $k=3$, which is a Goodey graph), actually requires six 
colors. Hence if our conjecture is true, the bound of $6$ cannot be
lowered even for Goodey graphs. It would be interesting to prove
our conjecture for Goodey graphs, i.e., to prove that {\em any Goodey 
graph can be distance-two six-colored}.

\newpage

\small
\bibliographystyle{abbrv}

\end{document}